\documentclass[12pt]{article}
\usepackage{fullpage}
\usepackage{amsfonts, amsmath, amssymb,latexsym,epsfig}
\usepackage{amsthm}
\usepackage{thmtools}
\usepackage{tikz}
\usetikzlibrary{graphs}
\usetikzlibrary{graphs.standard}
\usepackage{relsize}
\usepackage{verbatim}
\usepackage{enumerate}
\usepackage{dsfont}
\usepackage[skip=10pt plus1pt, indent=20pt]{parskip}

\usepackage[hidelinks]{hyperref}
\usepackage[capitalise,nameinlink,noabbrev]{cleveref}
\usepackage{bbm}

\newtheorem{thm}{Theorem}[section]
\newtheorem{cor}[thm]{Corollary}
\newtheorem{lem}[thm]{Lemma}
\newtheorem{prop}[thm]{Proposition}

\theoremstyle{definition}

\newtheorem{defn}[thm]{Definition} 
 
\newtheorem{notation}[thm]{Notation} 
\newtheorem{problem}[thm]{Problem} 
\newtheorem*{remark}{Remark}

\newtheorem{manualtheoreminner}{Theorem}
\newenvironment{manualtheorem}[1]
  {\renewcommand{\themanualtheoreminner}{\ref{#1}}%
   \begin{manualtheoreminner}}
  {\end{manualtheoreminner}}

  \newtheorem{manualnotinner}{Notation}

\newcommand{\HH}{\mathcal{H}}

\newcommand{\A}{\mathcal A}

\newcommand{\Gauss}[2]{{\begin{bmatrix} #1 \\ #2 \end{bmatrix}_2}}

\newcommand{\F}{\mathbb F}
\newcommand{\Fpn}{\mathbb{F}_{2^n}}
\newcommand{\Fpnv}{\mathbb{F}_2^n}
\newcommand{\Fpmv}{\mathbb{F}_2^m}

\newcommand{\set}[1]{\left\{#1\right\}}

\DeclareMathOperator{\degalg}{\deg_{{alg}}}
\DeclareMathOperator{\RM}{RM}
\DeclareMathOperator{\tr}{tr}
\DeclareMathOperator{\spann}{span}

\crefformat{equation}{#2(#1)#3}
\crefmultiformat{equation}{(#2#1#3)}{ and~(#2#1#3)}{, (#2#1#3)}{ and~(#2#1#3)}

\title{On Reed-Muller subcodes, Grassmannian partitions and sum-free functions}

\author{Philipp Heering\footnote{Justus-Liebig-Universität Gießen, Institute for Mathematics, Arndtstraße 2,
 35392 Gießen, Germany,
 (email: \href{mailto:philipp.heering@math.uni-giessen.de}{philipp.heering@math.uni-giessen.de})}, \ 
 Christian Kaspers\footnote{
Otto von Guericke University Magdeburg, Faculty of Mathematics, Institute for Algebra and Geometry, Universitätsplatz 2, 39106 Magdeburg,
Germany (email: \href{mailto:christian.kaspers@ovgu.de}{christian.kaspers@ovgu.de})}, \
Vladislav Taranchuk\footnote{Ghent University,
Department of Mathematics: Analysis, Logic and Discrete Mathematics, Krijgslaan 281,
B-9000 Gent, Belgium,
(email:  \href{mailto:vlad.taranchuk@ugent.be}{vlad.taranchuk@ugent.be})
}}

\date{2026}

\begin{document}

\maketitle

\begin{center}
\subsection*{Abstract}
\end{center}
A function $F:\F_{2}^{n}\to \F_{2}^{m}$ is called $k$th-order sum-free if the sum of its values over any $k$-dimensional affine subspace of $\F_2^n$ is non-zero. Carlet recently introduced this notion and constructed such functions for every $2\le k\le n$. We prove that, for $2\le k\le n-2$, the existence of a (non-degenerate) $\F_{2}^{m}$-valued $k$th-order sum-free function on $\F_{2}^{n}$ is equivalent to the existence of a codimension $m$ linear subcode of the Reed-Muller code $\mathrm{RM}(n-k,n)$ with minimum distance $3\cdot 2^{k-1}$. In particular, this yields a new family of Reed-Muller subcodes that avoid all minimum weight codewords of $\mathrm{RM}(n-k,n)$, and thus have minimum distance $3/2$ times that of $\mathrm{RM}(n-k,n)$. 
We also derive new necessary conditions for the existence of $k$th-order sum-free functions and present the first nontrivial lower bound on $m$. Finally, we observe that $k$th-order sum-free functions lead to a partition of the Grassmannian of all $k$-dimensional (linear) subspaces of $\F_2^n$ into constant-dimension subspace codes. Under the assumption that functions exist that are $k$th-order sum-free for multiple values of $k$, we obtain an improved partitioning result and a stronger upper bound on the chromatic number of the Grassmann graphs.

\noindent
 \textbf{Keywords:} Reed-Muller code, subcode, $k$th-order sum-free function, almost perfect non-linear function, Grassmann graph, coloring \\
 \textbf{MSC (2020):} 
94B05, 
11T06, 
05C15

\section{Introduction}
\label{sec:introduction}
The $r$th-order Reed-Muller code $\RM(r,n)$ of length $2^n$ is the set of value vectors of all Boolean functions on $n$ variables of algebraic degree at most $r$. It is well-known that the code $\RM(r,n)$ has minimum distance $2^{n-r}$ and dimension $\sum_{i = 0}^{r}\binom{n}{i}$. Since their introduction in 1954 by Muller~\cite{muller1954} and Reed~\cite{reed1954}, these linear codes have been intensively studied and have become one of the most important and best understood families of codes \cite{MacWilliams1977TheTO}. Despite their long history, they are still highly relevant; we refer to the survey \cite{abbe2021} for an overview of recent developments.

Identifying good subcodes of Reed-Muller codes can increase the available choices of code rate and can make Reed-Muller codes more attractive in practice \cite{subcodes_recursive}. This seems to be a difficult problem which has seen considerable attention over the years \cite{carletcharpinzinoviev1998}. It has long been known that almost perfect non-linear (APN) functions produce (linear) subcodes of the extended Hamming code (which is the Reed-Muller code RM$(n-2, n)$) with minimum distance 6 and dimension $2^n - 2n - 1$~\cite{carletcharpinzinoviev1998}. Dually, one can also obtain a subcode of RM$(2, n)$ with minimum distance $3\cdot 2^{n-3}$ and dimension $\binom{n}{2} + 1$. When $n$ is even, Preparata-like codes form a class of optimal (non-linear) subcodes of the extended Hamming code, whose size is equal to the size of a linear code with dimension $2^n - 2n$ \cite{BVW, parallelisms_1973}.  The ``dual" of the Preparata code is the Kerdock code, which is a non-linear subcode of RM$(2, n)$ and has size equal to that of a linear code with dimension $\binom{n}{2} + 2$. For a deeper background on these codes we refer the reader to \cite{MacWilliams1977TheTO}. In \cite{Jamali2022LowComplexityDO, Construction_of_RM_subcodes}, Reed-Muller subcodes are obtained by removing selected rows from the generator matrix of the Reed-Muller code. Moreover, $\RM(1,n_1)\otimes \RM(1,n_2)$ is a subcode of $\RM(2,n_1+n_2)$ with minimum distance $2^{n_1+n_2-2}$, and dimension $(n_1+1)(n_2+1)$~\cite{salomon}. Recursive subproduct codes have also been studied, see \cite{subcodes_recursive}.

A function $F\colon\F_2^n\to \F_2^m$ is called $k$th-order sum-free if $\sum_{x \in A}F(x) \neq 0$ for any affine $k$-subspace $A \subseteq \F_{2}^n$. Recently, Carlet~\cite{carlet2025inverse,carlet2025generalizations} introduced these functions as a generalization of APN functions.  An APN function is a function $F \colon \Fpnv \to \Fpnv$, such that the equation $F(x+a)+F(x)=b$ has at most $2$ solutions for all non-zero $a\in \Fpnv$ and all $b\in \Fpnv$. APN functions have been extensively studied due to their optimal resistance to differential attacks in cryptography~\cite{eurocrypt-1993-2628}. It is well-known~\cite{brinkmannleander2008,hou2006} that a function $F$ is APN if and only if $\sum_{x\in A} F(x) \ne 0$ for all affine $2$-subspaces $A\subseteq \Fpnv$, so APN functions are second-order sum-free. Since their introduction, $k$th-order sum-free functions have been studied by a number of authors \cite{carlet2025tDegree,carlet2025inverse,carlet2025generalizations,ebeling2024,houzhao2025b,houzhao2025a,kaspers2026WCC}, however, to the best of the authors' knowledge, no connection to coding theory has been established yet. For more background on the properties of these functions and their connections to cryptography, we refer to~\cite{carlet2025generalizations}.

In this paper, we establish a correspondence between subcodes of $\RM(r,n)$ and $(n-r)$th-order sum-free functions which satisfy a natural non-degeneracy condition. This correspondence enables us to give explicit constructions of subcodes in $\RM(r,n)$ of codimension $n$ for each $r$ and $n$ satisfying $2 \leq r \leq n-2$. For the range $2 < r < n-2$, these codes appear to be new. The subcodes we construct have a minimum distance of $3\cdot 2^{n-r-1}$, compared with the minimum distance $2^{n-r}$ of $\RM(r,n)$. Interestingly, the codimension of these subcodes is only $n$. Our construction is inspired by the subcodes of RM$(n-2, n)$ arising from APN functions. 

\begin{thm}\label{T: Equivalence}
    Let $r,m, n$ be positive integers satisfying $2 \leq r \leq n-2$ and $m \leq n$. Any codimension $m$ (linear) subcode of the Reed-Muller code $\RM(r,n)$, which has minimum distance $3\cdot 2^{n-r-1}$, gives rise to a non-degenerate $(n-r)$th-order sum-free $(n, m)$-function. Conversely, any  non-degenerate $(n-r)$th-order sum-free $(n, m)$-function yields a codimension $m$ (linear) subcode of $\RM(r,n)$, with minimum distance $3 \cdot 2^{n-r-1}$.
\end{thm}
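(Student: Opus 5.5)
The plan is to reduce everything to a single duality statement: codimension $m$ subcodes of $\RM(r,n)$ correspond to $(n,m)$-functions taken modulo degree $\le n-r-1$. The minimum weight $2^{n-r}$ codewords of $\RM(r,n)$ are exactly the indicator functions $\mathbbm{1}_A$ of affine $(n-r)$-subspaces $A$. I will use the classical Kasami--Tokura description of the weights of $\RM(r,n)$ below twice the minimum distance: for $r\ge 2$, the weights in $(2^{n-r},2^{n-r+1})$ are of the form $2^{n-r+1}-2^{n-r+1-\mu}$ with $\mu\ge 2$. In particular, the next weight after $2^{n-r}$ is $3\cdot 2^{n-r-1}$. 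This is where the hypothesis $r\ge 2$ enters. Hence a subcode $C\subseteq \RM(r,n)$ has minimum distance at least $3\cdot 2^{n-r-1}$ if and only if $C$ contains no $\mathbbm{1}_A$ with $A$ an affine $(n-r)$-flat.

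Next I set up the correspondence. Every linear map $\varphi:\RM(r,n)\to\F_2^m$ extends to $\F_2^{2^n}$, and is therefore of the form $\varphi_F(f)=\sum_{x\in\F_2^n} f(x)F(x)$ for some $F:\F_2^n\to\F_2^m$. Since $\RM(r,n)^\perp=\RM(n-r-1,n)$, the function $F$ is determined by $\varphi$ up to adding a function all of whose coordinate functions have degree $\le n-r-1$. A codimension $m$ subcode is precisely $C=\ker\varphi_F$ with $\varphi_F$ surjective. Surjectivity is equivalent to no nonzero component $v\cdot F$ lying in $\RM(n-r-1,n)$, that is, every nonzero component has degree $\ge n-r$. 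I will take this to be the non-degeneracy condition, possibly after matching it to the paper's definition. Then $\mathbbm{1}_A\in C$ if and only if $\sum_{x\in A}F(x)=0$. Combining with the first paragraph gives both directions:
\begin{itemize}
\item a codimension $m$ subcode avoiding all $(n-r)$-flats yields a non-degenerate $(n-r)$th-order sum-free $F$;
\item conversely, such an $F$ yields $C=\ker\varphi_F$ of codimension exactly $m$ and minimum distance at least $3\cdot 2^{n-r-1}$.
\end{itemize}

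The remaining point, and the one I expect to be the main obstacle, is showing that the minimum distance equals $3\cdot 2^{n-r-1}$ rather than exceeding it. This is where $m\le n$ should be used. I would exhibit many codewords of $\RM(r,n)$ of weight $3\cdot 2^{n-r-1}$ inside a small subspace on which $\varphi_F$ cannot be injective. The candidates are $\mathbbm{1}_{A}+\mathbbm{1}_{B}$ where $A,B$ are $(n-r)$-flats meeting in an $(n-r-2)$-flat; this sum has weight $2\cdot 2^{n-r}-2\cdot 2^{n-r-2}=3\cdot 2^{n-r-1}$.

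Concretely, fix an $(n-r-2)$-flat $E$ inside an $(n-r+1)$-dimensional space $W$; this needs $n-r+1\le n$, which holds. Consider the $(n-r)$-flats containing $E$ in $W$ and their pairwise sums. Among these flats, sums of two have weight $3\cdot 2^{n-r-1}$ or $2^{n-r}$, depending on whether the intersection has dimension $n-r-2$ or $n-r-1$. I would then find a subspace spanned by such indicators of dimension $>m$. Its intersection with $C$ is nonzero, and since $C$ contains no weight $2^{n-r}$ words, a careful choice should force a weight $3\cdot 2^{n-r-1}$ word into $C$. If this pigeonhole becomes messy, the fallback is a direct count via the Kasami--Tokura enumeration: the number of weight $3\cdot 2^{n-r-1}$ codewords exceeds $(2^m-1)/2^m$ of the codewords lying in a suitable span. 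Alternatively, I would note that the paper's statement may intend ``at least''.
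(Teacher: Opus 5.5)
Your first two steps are correct and are essentially the paper's argument (Lemma~\ref{L: Extract function}, the construction of $\mathcal{C}_F$ with its dimension count, and Proposition~\ref{P: New Code}). You phrase it through the functional $\varphi_F$ rather than the parity-check matrix, and your surjectivity criterion is exactly the paper's non-degeneracy.

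The genuine gap is the upper bound $d(C)\le 3\cdot 2^{n-r-1}$. The theorem does assert it, and it is the only place $m\le n$ enters, so retreating to ``at least'' does not prove the statement. Your sketch for this bound does not work:
\begin{enumerate}
\item Inside an $(n-r+1)$-dimensional $W$, translate so that $E$ is linear. The $(n-r)$-flats of $W$ through $E$ then correspond to the seven $2$-subspaces of $W/E\cong\F_2^3$, and any two of these meet in a line. So any two of the flats meet in an $(n-r-1)$-flat, and every pairwise sum is again a flat indicator of weight $2^{n-r}$. (Two $(n-r)$-spaces meeting in dimension $n-r-2$ span an $(n-r+2)$-space, which does not fit in $W$.)
\item Seven flats are also far too few for any pigeonhole against $2^m-1$ with $m$ up to $n$.
\item More fundamentally, the count ``a span of dimension $>m$ meets $C$ nontrivially'' gives no control over the weight of the vector you find. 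That vector is in general a sum of many indicators.
\item The Kasami--Tokura counting fallback is not substantiated.
\end{enumerate}

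The missing idea is to pigeonhole on \emph{values}, over a \emph{set} of flats rather than their span. Take more than $2^m-1$ $(n-r)$-flats that pairwise meet in dimension $n-r-1$ or $n-r-2$, i.e.\ a clique in $J_2(n,n-r,n-r-2)$. All $\omega_F(A)=\varphi_F(\mathbbm{1}_A)$ lie in $\F_2^m\setminus\{0\}$, so two of them, $A\neq B$, have the same witness, and then $\mathbbm{1}_A+\mathbbm{1}_B\in C$. If $\dim(A\cap B)=n-r-1$, this sum is the indicator of an $(n-r)$-flat, which $C$ avoids. Hence $\dim(A\cap B)=n-r-2$ and the sum has weight $3\cdot 2^{n-r-1}$.

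Large enough cliques exist:
\begin{itemize}
\item all $(n-r)$-spaces containing a fixed $(n-r-2)$-space, of which there are $\Gauss{r+2}{2}$;
\item all $(n-r)$-spaces inside a fixed $(n-r+2)$-space, of which there are $\Gauss{n-r+2}{2}$.
\end{itemize}
The larger has size at least $\Gauss{\lceil n/2\rceil+2}{2}>2^n-1\ge 2^m-1$. For $r=n-2$ the first family is all $2$-spaces, $\Gauss{n}{2}$ of them. This is the paper's Proposition~\ref{P: Code bound}, phrased there as a lower bound on $\chi(J_2(n,n-r,n-r-2))$ set against a coloring with $\Gauss{m}{1}$ colors.
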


Besides the new subcodes of $\RM(r,n)$, our work also leads to new results on the existence of $k$th-order sum-free functions. Up to now, only one infinite family of $k$th-order sum-free functions, introduced by Carlet~\cite{carlet2025tDegree,carlet2025generalizations}, is known.  While more $k$th-order sum-free functions from $\Fpnv$ to $\Fpmv$ do exist when $m = n$, as was shown by the second author \cite{kaspers2026WCC}, it is not clear if they exist for $m <n$. Therefore, Carlet~\cite{carlet2025generalizations} asked for each $k$: What is the smallest value of $m$ such that a $k$th-order sum-free $(n,m)$-function exists? It is well known that for $m<n$, a function cannot be APN (second-order sum-free), unless $n = 2$ as was shown by Nyberg~\cite{eurocrypt-1993-2628}. Carlet~\cite{carlet2025generalizations} observed that a Boolean function cannot be $k$th-order sum-free for all $1\le k\le n-1$. For $1<m<n$ and $k \ge 3$ however, the question is completely open.  In \cref{T: m lower bound} we present the first nontrivial lower bound: $m\geq \max\{n-k+2,k+2\}$. Moreover, we show that for the only known infinite family of $k$th-order sum-free functions, we always have $m=n$.

Finally, we obtain results related to partitioning the Grassmannian of all (linear) $k$-subspaces of $\F_2^n$ into constant-dimension subspace codes, a topic which has seen much activity in recent years \cite{dhaeseleer2026chromaticnumbergrassmanngraphs, dhaeseleer2025chromaticnumbergrassmanngraphs, Etzion2015PartialKParallelisms, EtzionSilberstein2009Ferrers, heering2025lineparallelismspgn2preparatalike, KoetterKschischang2008, SilbersteinEtzion2011Lexicodes}. Our results are best stated in terms of the chromatic number of the Grassmann graphs $J_2(n, k)$ whose vertex set is the Grassmannian of all (linear) $k$-subspaces of $\F_2^n$ and two vertices are adjacent if their corresponding $k$-subspaces meet in a (linear) $(k-1)$-subspace. A valid coloring of $J_2(n, k)$ is equivalent to partitioning all $k$-spaces of $\F_2^n$ into subspace codes where each code has minimum (subspace) distance at least 4.

We observe that any $k$th-order sum-free function $F\colon \F_2^n \rightarrow \F_2^m$ can be used to give a proper coloring of $J_2(n, k)$ with $2^m - 1$ colors. This fact, together with the known constructions of $k$th-order sum-free functions, allows us to recover and extend the result of D'haeseleer and the third author \cite{dhaeseleer2025chromaticnumbergrassmanngraphs} on $\chi(J_2(n, k))$.  Of greater interest is our generalization of recent work by the first and third author \cite{heering2025lineparallelismspgn2preparatalike} where the authors use quadratic APN permutations to give optimal colorings of $J_2(n, 2)$ using $2^{n-1}-1$ colors when $n$ is even. In Section \ref{sec:coloring}, we establish Theorem \ref{T: k and k-1-order sum-free gives coloring}, which generalizes this construction. In particular, we give an explicit coloring of $J_2(n+1, k)$ using $2^m - 1$ colors which depends on the existence of a function $F:\F_2^n \rightarrow \F_2^m$ which has algebraic degree $k$ and is both $k$th-order and $(k-1)$th-order sum-free. In Section \ref{sec: multiorder sum-free functions} the existence of such functions is discussed. A small positive result provides a coloring of $J_2(6,3)$, this improves the previously best known upper bounds \cite{dhaeseleer2026chromaticnumbergrassmanngraphs, dhaeseleer2025chromaticnumbergrassmanngraphs}. 

The remainder of the paper is organized as follows. In \cref{sec:preliminaries}, we introduce all the necessary definitions and basic results about $k$th-order sum-free functions, Reed-Muller codes and the Grassmann graph needed in the course of the paper. In \cref{sec:sum-free_functions}, we present our results on $k$th-order sum-free functions. In \cref{sec:reed-muller}, we prove \cref{T: Equivalence} which is obtained via a series of lemmas and propositions. 
In \cref{sec:coloring} we demonstrate how sum-free functions can be used to give valid colorings of the Grassmann graphs.
We conclude with \cref{sec: multiorder sum-free functions}, which is a short section on some theoretical and computational results regarding multiorder sum-free functions. Finally, we present a number of interesting open questions which naturally arise as a consequence of the results obtained in this paper.

\section{Preliminaries}
\label{sec:preliminaries}

\subsection{$k$th-order sum-free functions}
\label{subsec:sum-free_functions}
Let $\F_q$ denote the finite field with $q$ elements, and denote its multiplicative group by $\F_q^*$. Throughout the paper, we identify the finite field $\Fpn$ with the vector space $\Fpnv$. We call a $k$-dimensional subspace simply a \emph{$k$-space} and a $k$-dimensional affine subspace a \emph{$k$-flat}. It is well-known that the number of $k$-spaces of $\Fpnv$ is
\[
     \Gauss{n}{k} := \frac{(2^{n}-1)(2^{n-1}-1) \cdots (2^{n-k+1}-1)} {(2^{k}-1)(2^{k-1}-1)\cdots(2-1)}
\]
and that the number of $k$-flats of $\Fpnv$ equals $2^{n-k}\Gauss{n}{k}$.
We have $\Gauss{n}{1} = 2^n-1$, the number of non-zero elements of $\F_{2^n}$. We remark that $\Gauss{n}{0} = 1$. \par

We study the behavior of $(n,m)$-functions on $k$-flats. An \emph{$(n,m)$-function} is a function~$F$ from $\Fpnv$ to $\Fpmv$. If $m=1$, we speak of an $n$-variable \emph{Boolean function} $f$. We represent an $(n,m)$-function usually by its Boolean \emph{coordinate functions} $f_1,\dots,f_m$. If $F$ is an $(n,n)$-function, we also use its univariate representation as a polynomial on $\Fpn$. For a non-zero element $v\in\Fpmv$, we call the Boolean function $f_v$ defined by $f_v(x) = v\cdot F(x)$ a \emph{component function} of $F$; here $\cdot$ denotes a scalar product on $\Fpmv$, and in univariate representation we may simply define $f_v(x)=\tr(vF(x))$, where $\tr \colon \Fpnv \to \F_2$ is the trace function $\tr(x) = \sum_{i=0}^{n-1}x^{2^i}$.

The \emph{algebraic degree} $\degalg(f)$ of a Boolean function $f$ is the degree of $f$ when given in its algebraic normal form (see \cite{carlet2021book} for more background). For an $(n,m)$-function, the algebraic degree is defined as the maximum algebraic degree of its coordinate functions. If $F$ is an $(n,n)$-function given in univariate representation, $F(x) = \sum_{i=0}^{2^n-1}a_ix^i$, then $\degalg(F)$ is the maximum $2$-weight of $i$ with $a_i \ne 0$. The \emph{derivative} $D_aF$ of an $(n,m)$-function $F$ in direction $a \in \Fpnv$ is the $(n,m)$-function $D_aF(x) = F(x+a)+F(x)$. The $\ell$th-order derivative $D_{a_1}\cdots D_{a_\ell}F$ of $F$ in direction $(a_1,\dots,a_\ell) \in (\Fpnv)^\ell$ is the composition $D_{a_1}\cdots D_{a_\ell}F(x) = (D_{a_\ell}F\circ\dots\circ D_{a_1}F)(x)$.

Moreover, we introduce three important equivalence notions of $(n,m)$-functions. Two $(n,m)$-functions $F,G$ with graphs $G_F := \set{(x,F(x)) : x\in\Fpnv}$ and $G_G$, respectively, are called \emph{CCZ-equivalent} if there exists an affine $(n+m,n+m)$-permutation $C$ such that $G_F = C\circ G_G$. They are said to be \emph{degree-$r$ equivalent} if there exists an affine $(n,n)$-permutation $A_1$, an affine $(m,m)$-permutation~$A_2$, and an $(n,m)$-function $A_3$ of algebraic degree $r$ such that $F = A_2 \circ G \circ A_1 + A_3$.  If $F,G$ are degree-$1$ equivalent, then they are called \emph{EA-equivalent}. Clearly, EA-equivalence implies CCZ-equivalence, but in general the converse is not true. 
The notion of degree-$r$ equivalence was only recently introduced by the second author~\cite{kaspers2026WCC}, but we remark that this equivalence has been used before to classify Boolean functions~\cite{hou1996,langevinleander2008}.

We now introduce $k$th-order sum-free functions beginning with the most popular case: APN functions. An \emph{almost perfect non-linear (APN) function} is an $(n,n)$-function $F$ with the property that the equation
\[
    F(x+a)+F(x) = b
\]
has at most two solutions for all non-zero $a\in\Fpnv$ and all $b\in\Fpnv$. In \cite{brinkmannleander2008,hou2006}, it was shown that the APN property of $F$ is equivalent to $\sum_{x\in A}F(x)\ne0$ for all $2$-flats $A\subseteq\Fpnv$. This motivates the following definition recently introduced in~\cite{carlet2025inverse,carlet2025generalizations}.
\begin{defn}
\label{def:sum-free}
    Let $k,m,n$ be nonnegative integers with $k \le n$ and $n\ne 0$. An $(n,m)$-function $F$ is called \emph{$k$th-order sum-free} if for all $k$-flats $A\subseteq\Fpnv$, we have
    \[
        \sum_{x\in A}F(x) \ne 0.
    \]
\end{defn}
We remark that, unlike Carlet~\cite{carlet2025generalizations}, we also include $k=0$ and $k=1$ in the definition of $k$th-order sum-free functions. Clearly, any function with no roots is zero-order sum-free, permutations are first-order sum-free and APN functions are second-order sum-free. The APN property of a function is well-known to be invariant under CCZ-equivalence. However, to date, $k$th-order sum-freedom for $k\ge 3$ is only known to be invariant under degree-$(k-1)$ equivalence~\cite{kaspers2026WCC}. Note that there is a close connection between the sum-freedom of $F$ and its higher-order derivatives: If $A = \text{span}(a_1,\dots,a_k)+x$ is a $k$-flat of $\Fpnv$, then $D_{a_1}\cdots D_{a_k}F(x) = \sum_{y\in A}F(y)$. So $F$ is $k$th-order sum-free if and only if all its $k$th-order derivatives $D_{a_1}\cdots D_{a_k}$, with $a_1,\dots,a_k$ linearly independent, never map to zero. In \cref{Notation: Witness}, we introduce a helpful notation to speak about sums of images over flats.

\begin{defn}
\label{Notation: Witness}
    Let $F$ be an $(n,m)$-function. Denote the set of all flats of $\Fpnv$ by $\A$. We define the \emph{witness function $\omega_F \colon \A \to \Fpmv$ of $F$} by 
    \begin{align*}
        \omega_F(A)=\sum_{x\in A} F(x)
    \end{align*}
    and call $\omega_F(A)$ the \emph{witness of $A$ with respect to $F$}.
    If the function $F$ is clear from the context, we simply write $\omega(A)$.
\end{defn}

Using the language from \cref{Notation: Witness}, a function $F$ is $k$th-order sum-free if $\omega_F(A)\ne 0$ for all $k$-flats $A \subseteq \Fpnv$. Carlet~\cite{carlet2025generalizations} observed the following two properties of $k$th-order sum-free functions. We remark that his results for $k\ge 2$ naturally extend to the case $k=1$.

\begin{prop}
\label{prop:k-1_flats}
    Let $F$ be an $(n,m)$-function, and let $1\le k\le n$. Then $F$ is $k$th-order sum-free if and only if for all $(k-1)$-spaces $U \subseteq \F_2^n$ and $a, b \in \F_{2}^n$ we have $\omega_F(U+a) \neq \omega_F(U+b)$ whenever $U+a \neq U+b$.
\end{prop}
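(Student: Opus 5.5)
The plan is to prove both directions by partitioning a $k$-flat into two parallel $(k-1)$-flats and comparing witnesses. The basic identity is this: if $A$ is a $k$-flat and $U$ is a $(k-1)$-space contained in the direction space of $A$, then $A$ is the disjoint union of two distinct cosets $U+a$ and $U+b$. Over $\F_2^m$ we then have
\[
\omega_F(A)=\omega_F(U+a)+\omega_F(U+b).
\]
Hence $\omega_F(A)=0$ holds if and only if $\omega_F(U+a)=\omega_F(U+b)$.

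For the forward direction, suppose $F$ is $k$th-order sum-free. Take a $(k-1)$-space $U$ and points $a,b$ with $U+a\neq U+b$. Then $a+b\notin U$, so $W:=U+\spann(a+b)$ is a $k$-space. The set $A:=W+a$ is a $k$-flat and equals $(U+a)\cup(U+b)$, where the union is disjoint. The identity gives $\omega_F(U+a)+\omega_F(U+b)=\omega_F(A)\neq 0$, so the two witnesses differ.

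For the converse, let $A$ be any $k$-flat, written $A=W+a$ with $W$ a $k$-space. Since $k\ge 1$, we can choose a $(k-1)$-space $U\subset W$ and a vector $w\in W\setminus U$. Then $A=(U+a)\cup(U+a+w)$, and the two cosets are distinct because $w\notin U$. By hypothesis $\omega_F(U+a)\neq\omega_F(U+a+w)$, so the identity gives $\omega_F(A)\neq 0$.

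The only points needing care are that the union is disjoint, so that the sum genuinely splits, and the degenerate case $k=1$. When $k=1$, $U=\{0\}$ is the $0$-space, the cosets are single points, and the statement reduces to injectivity of $F$. The argument above covers this case without change, so I expect no real obstacle.
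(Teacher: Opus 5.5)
Your proof is correct. Splitting a $k$-flat into two distinct cosets of a $(k-1)$-space and using $\omega_F(A)=\omega_F(U+a)+\omega_F(U+b)$ handles both directions, including the edge cases $k=1$ and $k=n$. The paper gives no proof of its own: it attributes the statement to Carlet and only remarks that it extends to $k=1$. Your decomposition is the same one the paper uses in the proof of \cref{L: kth-order coloring}, so this is the standard argument.
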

As we will see in \cref{sec:coloring}, this property allows us to partition the Grassmannian of all $k$-spaces in $\F_2^n$ where two spaces in the same part intersect in dimension at most $k-2$.

\begin{prop}
\label{prop:degree<k}
    Let $F$ be an $(n,m)$-function, and $1\le k\le n$. We have $\omega_F(A)=0$ for all $k$-flats $A\subseteq\Fpnv$ if and only if $F$ has algebraic degree at most $k-1$. In particular, if $F$ is $k$th-order sum-free, then $\degalg(F)\ge k$. 
\end{prop}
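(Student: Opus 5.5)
The plan is to prove the two directions separately and reduce everything to sums of $F$ over affine subspaces. The statement concerns \cref{prop:degree<k}. The direction ``if $\degalg(F)\le k-1$ then $\omega_F(A)=0$'' will come from the derivative identity $D_{a_1}\cdots D_{a_k}F(x)=\sum_{y\in A}F(y)$ for $A=\spann(a_1,\dots,a_k)+x$. The converse will use the Möbius (ANF) inversion formula applied to sums over flats. Since everything is coordinatewise, it suffices to treat a Boolean function $f$, i.e.\ $m=1$.

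For the first direction, I will use the fact that each derivative lowers the algebraic degree by at least one. If $f(x)=\prod_{i\in I}x_i$, then $D_af(x)=\prod_{i\in I}(x_i+a_i)+\prod_{i\in I}x_i$. In this expression the top-degree monomial cancels, so the degree drops. By linearity this holds for every $f$. Consequently, $k$ successive derivatives of a function of degree at most $k-1$ vanish identically. Evaluating at $x$ then gives $\omega_f(A)=0$ for every $k$-flat $A$.

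For the converse, suppose $f$ has degree $d\ge k$, and choose a monomial $x^I$ with $|I|=d$ that appears in its ANF with coefficient $1$. The standard inversion formula states that the ANF coefficient of $x^I$ equals $\sum_{x\preceq \mathbf 1_I} f(x)=\sum_{x\in V_I}f(x)$, where $V_I=\spann(e_i: i\in I)$. I want a $k$-flat sum instead of this $d$-space sum, so I take derivatives in only $k$ of the directions. Pick $J\subseteq I$ with $|J|=k$ and consider $g=D_{e_j, j\in J}f$; this is well defined because the order of derivatives does not matter. Then $g$ is a function containing the monomial $x^{I\setminus J}$ with coefficient $1$. Indeed, differentiating $x^K$ along $e_j$ yields $x^{K\setminus\{j\}}$ when $j\in K$ and $0$ otherwise. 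Hence only monomials $x^K$ with $K\supseteq J$ survive, and they map to $x^{K\setminus J}$ injectively, so the coefficient of $x^{I\setminus J}$ in $g$ is exactly that of $x^I$ in $f$. Therefore $g\ne 0$, and there is an $x$ with $g(x)=1$. For such $x$, the flat $A=\spann(e_j:j\in J)+x$ is a $k$-flat with $\omega_f(A)=g(x)=1\neq0$.

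Applying this to a coordinate function of $F$ of degree at least $k$ then yields a $k$-flat with nonzero witness, which completes the equivalence. The ``in particular'' statement follows immediately: a $k$th-order sum-free $F$ has no $k$-flat with zero witness, so it certainly does not have all witnesses zero, and hence $\degalg(F)\ge k$. The only step requiring care is the bookkeeping for the surviving monomials under the $k$ coordinate derivatives. No step is a real obstacle. The injectivity $K\mapsto K\setminus J$ on sets containing $J$ is the key point, and it is immediate.
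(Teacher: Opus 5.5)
Your proof is correct and complete. The first direction follows from the fact that each derivative lowers the degree, combined with the identity $D_{a_1}\cdots D_{a_k}F(x)=\omega_F(A)$. The converse follows from the coordinate derivatives $D_{e_j}$, $j\in J$, the injectivity of $K\mapsto K\setminus J$ on supersets of $J$, and uniqueness of the ANF.

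The paper gives no proof of its own; it quotes the statement from Carlet. The derivative identity it records right after the definition of sum-freeness is exactly the bridge you use, so your argument is the standard one, made self-contained. Your mention of the inversion formula for the ANF coefficient is never used and can be dropped.

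For comparison, there is a route in the Reed--Muller language of \cref{sec:reed-muller}. There, $\omega_f(A)$ is the inner product of $f$ with the incidence vector of $A$, so the vanishing condition says $f\in\RM(n-k,n)^\perp=\RM(k-1,n)$. That route needs the classical fact that incidence vectors of $k$-flats span $\RM(n-k,n)$; your derivative argument avoids it.
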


To obtain a good understanding of the Reed-Muller subcodes which arise via $k$th-order sum-free functions, as we will show in \cref{sec:reed-muller}, we require a notion of degeneracy of $(n, m)$-functions which we give next.

\begin{defn}
    Let $r\le n$. An $(n, m)$-function $F$ is called \textit{degree-$r$ degenerate} if it has a component function whose algebraic degree is less than $r$. A $k$th-order sum-free function $F$ is called \textit{degenerate} if it is degree-$k$ degenerate. Otherwise, we call $F$ \textit{non-degenerate}.
\end{defn}

\begin{remark}
    The above definition enables us to exclude $k$th-order sum-free $(n, m)$-functions which are degree-$(k-1)$ equivalent to an $(n, \ell)$-function with $\ell < m$, and hence are actually  $k$th-order sum-free $(n,\ell)$-functions in hiding. This will be helpful in \cref{sec:sum-free_functions,sec:reed-muller}.
\end{remark}

So far, when $k\ge 2$, only few $k$th-order sum-free functions are known \cite{carlet2025tDegree,carlet2025generalizations,kaspers2026WCC}. Note that any permutation is first-order sum-free, and any function which has no roots is zero-order sum-free. In both cases, there are infinitely many $(n, n)$-functions. Carlet~\cite{carlet2025tDegree,carlet2025generalizations} introduced the following infinite family of power functions which produces $k$th-order sum-free $(n,n)$-functions for all $n$ and $k$. 

\begin{prop}
\label{prop:Carlet_kth-order}
	Let $k\le n$. Define the $(n,n)$-function $F_{k,j}$ by
	\[
		F_{k,j}(x) = x^\frac{2^{jk}-1}{2^j-1} = x^{1+2^j+\dots+2^{j(k-1)}}.
	\]
	If $\gcd(j,n) = 1$, then $F_{k,j}$ is $k$th-order sum-free. In particular, the function $F_{k,1}(x) = x^{2^k-1}$ is $k$th-order sum-free.
\end{prop}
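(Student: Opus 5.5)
The plan is to compute the $k$th-order derivatives of $F_{k,j}$ explicitly as a Moore-type determinant, and then show that this determinant cannot vanish on linearly independent directions when $\gcd(j,n)=1$. Write $\sigma(x)=x^{2^j}$, an automorphism of $\Fpn$. Since $\gcd(j,n)=1$, its fixed field is $\F_{2^{\gcd(j,n)}}=\F_2$. Then
\[
F_{k,j}(x)=\prod_{i=0}^{k-1}\sigma^i(x)
\]
is a product of $k$ additive maps. By the remark after \cref{def:sum-free}, it suffices to show that $D_{a_1}\cdots D_{a_k}F_{k,j}(x)\neq 0$ for all $x$ whenever $a_1,\dots,a_k$ are $\F_2$-linearly independent.

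\textbf{Step 1 (derivative as a determinant).} Expand $\prod_i \sigma^i\bigl(x+\sum_{l\in S}a_l\bigr)$ by additivity and sum over all $S\subseteq\{1,\dots,k\}$. A monomial assigns to each factor $i$ either $x$ or some $a_l$. Such a monomial survives the sum over $S$ (i.e.\ appears an odd number of times) exactly when every $a_l$, $l=1,\dots,k$, is used. Since there are exactly $k$ factors, the assignment must then be a bijection $\sigma\colon\{0,\dots,k-1\}\to\{1,\dots,k\}$ in the indices, and $x$ does not appear at all. Hence
\[
D_{a_1}\cdots D_{a_k}F_{k,j}(x)=\sum_{\pi}\prod_{i=0}^{k-1}\sigma^i\bigl(a_{\pi(i)}\bigr)=\det\bigl(\sigma^i(a_l)\bigr)_{0\le i\le k-1,\,1\le l\le k},
\]
where $\pi$ runs over bijections and the permanent equals the determinant in characteristic $2$. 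In particular the derivative is independent of $x$, consistent with \cref{prop:degree<k}.

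\textbf{Step 2 (Moore determinant lemma).} I prove by induction on $t$ that if $a_1,\dots,a_t\in\Fpn$ are $\F_2$-independent, then the $t\times t$ matrix $M_t=(\sigma^i(a_l))_{0\le i<t,\ 1\le l\le t}$ is nonsingular. The case $t=1$ is trivial. For the inductive step, suppose the columns of $M_t$ satisfy a nontrivial relation $\sum_l c_l\sigma^i(a_l)=0$ for $0\le i<t$, with $c_l\in\Fpn$ and, after scaling, $c_1=1$.
\begin{enumerate}[(i)]
\item Apply $\sigma$ to the relations for $i=0,\dots,t-2$. This gives $\sum_l \sigma(c_l)\sigma^{i}(a_l)=0$ for $1\le i\le t-1$.
\item Subtract these from the original relations for $1\le i\le t-1$. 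This gives $\sum_l (c_l-\sigma(c_l))\sigma^i(a_l)=0$ for $1\le i\le t-1$.
\item Apply $\sigma^{-1}$. This yields a column relation among the vectors $(\sigma^i(a_l))_{0\le i\le t-2}$. The coefficient of $l=1$ is $0$, so only the $t-1$ columns with indices $l=2,\dots,t$ appear.
\item By the inductive hypothesis applied to $a_2,\dots,a_t$, these $t-1$ columns of $M_{t-1}$ are independent. Hence $c_l=\sigma(c_l)$, i.e.\ $c_l\in\F_2$ for all $l$.
\end{enumerate}
The row $i=0$ of the original relation now reads $\sum_l c_la_l=0$, a nontrivial $\F_2$-relation among the $a_l$, which is a contradiction.

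\textbf{Conclusion and main obstacle.} Combining the two steps, the $k$th derivative in any $\F_2$-independent directions equals $\det M_k\neq0$, so $F_{k,j}$ is $k$th-order sum-free. The case $j=1$ gives $x^{2^k-1}$. I expect Step 2 to be the main point: the naive approach of counting roots of a $\sigma$-linearized polynomial fails because its degree $2^{j(k-1)}$ can be much larger than $2^k$. The minimal-relation/Frobenius-shift induction above avoids this, and it is exactly where the hypothesis that the fixed field of $\sigma$ is $\F_2$ (i.e.\ $\gcd(j,n)=1$) enters.
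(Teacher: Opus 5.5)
Your proof is correct and is essentially the argument the paper points to. The paper does not prove this statement itself; it cites Carlet and notes that the sum over a $k$-flat is a special determinant. Likewise, you identify $D_{a_1}\cdots D_{a_k}F_{k,j}$ with the generalized Moore determinant $\det\bigl(a_l^{2^{ji}}\bigr)$ and show it is nonzero because the fixed field of $x\mapsto x^{2^j}$ is $\F_2$. Only cosmetic fixes are needed: relabel so that $c_1\neq 0$ before scaling in Step 2, and avoid reusing $\sigma$ for the bijection in Step 1.
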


Until now, this is the only known infinite family of $k$th-order sum-free functions. In \cref{T: each Carlet function is nn}, we demonstrate that these functions are non-degenerate. Constructing genuinely new sum-free functions appears to be difficult. Currently, one of the most useful tools we have for constructing new sum-free functions from old ones is a duality between certain $k$th-order and $(n-k)$th-order sum-free functions \cite[Theorem~5.12]{kaspers2026WCC}.

\begin{prop}
\label{P: k gives n-k}
    Any $k$th-order sum-free $(n,m)$-function of algebraic degree $k$ implies the existence of an $(n-k)$th-order sum-free $(n,m)$-function of algebraic degree $n-k$.
\end{prop}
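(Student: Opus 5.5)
The plan is to exploit the fact that, for a function of algebraic degree exactly $k$, the witness of a $k$-flat depends only on its direction. This turns $k$th-order sum-freedom into a statement about a linear map on the exterior power $\Lambda^k(\Fpnv)$. We then dualize via the standard isomorphism $\Lambda^k V\cong \Lambda^{n-k}V^*$ and realize the dual map as the $(n-k)$th derivative of an explicit polynomial.

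\textbf{Step 1: linearize.} Let $F$ be $k$th-order sum-free with $\degalg(F)=k$. For any $a_1,\dots,a_k$, the derivative $\Phi(a_1,\dots,a_k):=D_{a_1}\cdots D_{a_k}F(x)$ is independent of $x$, since a $k$th derivative of a degree-$k$ function is constant. It is additive in each argument: indeed $D_{a+b}=D_a+D_b+D_aD_b$, and the extra term is a $(k+1)$th derivative, hence zero. It also vanishes whenever two arguments coincide. Hence $\Phi$ is an alternating $k$-linear map $(\Fpnv)^k\to\Fpmv$, i.e.\ a linear map $\phi\colon\Lambda^k(\Fpnv)\to\Fpmv$. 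Since $D_{a_1}\cdots D_{a_k}F(x)=\omega_F(\spann(a_1,\dots,a_k)+x)$, the sum-free property says exactly that $\phi(a_1\wedge\dots\wedge a_k)\neq 0$ for every nonzero decomposable $k$-vector, i.e.\ for every $k$-space.

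\textbf{Step 2: dualize.} Fix the volume form and the standard dot product on $\Fpnv$. The wedge pairing $\Lambda^k\times\Lambda^{n-k}\to\Lambda^n\cong\F_2$ gives an isomorphism $\Lambda^{n-k}(\Fpnv)\to(\Lambda^k(\Fpnv))^*$. Composing with the identification of $\Fpnv$ with its dual gives a linear isomorphism $\theta\colon\Lambda^{n-k}(\Fpnv)\to\Lambda^k(\Fpnv)$. This $\theta$ sends the decomposable vector of an $(n-k)$-space $W$ to (a nonzero multiple of, here simply) the decomposable vector of $W^\perp$. This is the classical Hodge/Plücker duality, and verifying it in characteristic $2$ via a basis adapted to $W$ is the step I expect to require the most care. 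Set $\psi:=\phi\circ\theta$. Then $\psi$ is an alternating $(n-k)$-linear map that is nonzero on every linearly independent $(n-k)$-tuple, because $W^\perp$ is a genuine $k$-space.

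\textbf{Step 3: realize $\psi$ as a derivative.} Define
\[
G(x)=\sum_{I}\psi(e_{i_1},\dots,e_{i_{n-k}})\,x_{i_1}\cdots x_{i_{n-k}},
\]
where the sum runs over all $I=\{i_1<\dots<i_{n-k}\}\subseteq\{1,\dots,n\}$. For a monomial $x_I$ with $|I|=\ell$, the derivative $D_{a_1}\cdots D_{a_\ell}x_I$ is constant and equals the permanent, which over $\F_2$ is the determinant, of the $\ell\times\ell$ submatrix of $(a_1|\cdots|a_\ell)$ on the rows in $I$. Expanding $\psi(a_1,\dots,a_{n-k})$ multilinearly in the standard basis, and using that $\psi$ is alternating (so terms with a repeated basis vector vanish), gives
\[
\psi(a_1,\dots,a_{n-k})=\sum_I\det\bigl((a_j)_I\bigr)\,\psi(e_I).
\]
Hence $D_{a_1}\cdots D_{a_{n-k}}G=\psi(a_1,\dots,a_{n-k})$, which is nonzero for all linearly independent $a_j$. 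Therefore $G$ is $(n-k)$th-order sum-free. By \cref{prop:degree<k}, $G$ has degree at least $n-k$, and by construction it is homogeneous of degree $n-k$, so $\degalg(G)=n-k$.
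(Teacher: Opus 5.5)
Your proof is correct and complete in substance. The paper does not prove this proposition itself; it quotes it from \cite[Theorem~5.12]{kaspers2026WCC}, so there is no in-paper argument to compare yours with. What you give is a self-contained replacement for that citation.

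\textbf{Steps 1 and 3.} These are fine as written. The identities $D_{a+b}=D_a+D_b+D_aD_b$ and $D_aD_a=0$ are correct, and so is the permanent formula for $D_{a_1}\cdots D_{a_\ell}x_I$. The alternating property does collapse the multilinear expansion to $\sum_I\det((a_j)_I)\,\psi(e_I)$.

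\textbf{Step 2.} You flagged this step but did not verify it. It holds with no characteristic-$2$ subtlety, even when $W\cap W^\perp\neq\{0\}$. Take a basis $w_1,\dots,w_{n-k}$ of $W$ and a basis $u_1,\dots,u_k$ of $W^\perp$. Consider the two forms
\[
(v_1,\dots,v_k)\mapsto\det[v_1|\cdots|v_k|w_1|\cdots|w_{n-k}]
\quad\text{and}\quad
(v_1,\dots,v_k)\mapsto\det(u_i\cdot v_j)_{i,j}.
\]
Both are nonzero alternating $k$-forms on $\Fpnv$, and both factor through $\Fpnv/W$. For the second form this is because the kernel of $v\mapsto(u_i\cdot v)_i$ is $(W^\perp)^\perp=W$, by nondegeneracy of the dot product. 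Alternating $k$-forms on the $k$-dimensional space $\Fpnv/W$ form a one-dimensional space, so over $\F_2$ the two forms coincide. This is exactly the statement $\theta(w_1\wedge\dots\wedge w_{n-k})=u_1\wedge\dots\wedge u_k$.

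\textbf{What your construction gives explicitly.} It is worth unwinding the construction. Since $\spann(e_j:j\in J)^\perp=\spann(e_i:i\notin J)$, you get $\theta(e_J)=e_{J^c}$. Also $\phi(e_I)=D_{e_{i_1}}\cdots D_{e_{i_k}}F=c_I$, the coefficient of $x_I$ in the algebraic normal form of $F$. Hence your function is simply
\[
G(x)=\sum_{|J|=n-k}c_{J^c}\,x_J .
\]
In words, $G$ is the degree-$k$ part of $F$ with each monomial replaced by its complementary monomial. Step 2 is then the classical fact that the Plücker coordinates of $W^\perp$ are the complementary Plücker coordinates of $W$.

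This explicit form yields more than the statement asks for:
\begin{enumerate}
\item You only use $\degalg(F)\le k$.
\item $\omega_G(W+x)=\omega_F(W^\perp+y)$ for all $x,y$. So the witness values of $G$ on $(n-k)$-flats are exactly those of $F$ on $k$-flats, and the induced colorings of $J_2(n,n-k)$ and $J_2(n,k)$ correspond under $U\mapsto U^\perp$.
\item A component $v\cdot G$ has degree less than $n-k$ if and only if $v\cdot c_I=0$ for all $|I|=k$, which holds if and only if $v\cdot F$ has degree less than $k$. So $G$ is non-degenerate exactly when $F$ is, which is the property needed when feeding such functions into the Reed--Muller correspondence of Theorem~1.1.
\end{enumerate}
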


 Finally, we introduce the new term multiorder sum-free functions. As we will see, such functions will at times allow us to produce better partitions of the Grassmannian of $k$-spaces over $\F_2^{n}$.
\begin{defn}
\label{def:multiorder-sum-free}
	Let $F$ be an $(n,m)$-function, and denote
	\[
		K_F = \set{k \in \set{1, 2,\dots,n} : \text{$F$ is $k$th-order sum-free}}.
	\]
	We say that $F$ is \emph{$K_F$-order sum-free}. If $|K_F| \ge 2$, we say that $F$ is \emph{multiorder sum-free}.
\end{defn}
	
One example of a multiorder sum-free function is the inverse function $F_{inv}(x) = x^{2^n-2}$ in odd dimension $n$. Then $K_{F_{inv}} = \set{1, 2,n-2,n-1}$, see \cite{carlet2025inverse}. In \cref{th:multiorder_carlet}, we show that some of the functions from \cref{prop:Carlet_kth-order} are also multiorder sum-free.

\subsection{Reed-Muller codes}
\label{subsec:reed-muller}

\begin{defn} 
The binary Reed-Muller code $\mathrm{RM}(r,n)$ of order $r$ and length $2^n$ is the set of evaluation vectors of all $n$-variable Boolean functions of algebraic degree at most $r$:
\[
\mathrm{RM}(r,n)
=
\left\{
\bigl(f(a)\bigr)_{a \in \mathbb{F}_2^n}
\;:\;
f \in \mathbb{F}_2[x_1,\dots,x_n],\ \deg(f)\le r
\right\}.
\]
Here the coordinates of the codewords are indexed by the points of $\mathbb{F}_2^n$.
\end{defn}
The Reed-Muller code $\RM(r,n)$ is a linear code. It has dimension $\sum_{i=0}^r \binom{n}{i}$ and minimum weight $2^{n-r}$, and we have $\RM(r,n)^\perp = \RM(n-r-1,n)$. The weight of a codeword in $\RM(r,n)$ whose weight is less than $2\cdot 2^{n-r}$ is  $2^{n-r+1} - 2^{n - r + 1 - d}$ for some positive integer~$d$ as was shown by Berlekamp and Sloane \cite{1970Berlekamp}. In \cite{kasami1970}, Kasami and Tokura characterized all such codewords. We state their result in \cref{T: min-weights}.

\begin{thm}\label{T: min-weights}
    Let $f \in \RM(r, n)$ be a codeword (Boolean function) of weight strictly less than $2^{n-r+1}$. Then $f$ is equivalent to one of the following codewords $f'$ by an invertible affine transformation:
    \begin{enumerate}
        \item  $f' = x_1x_2\cdots x_{r-d}(x_{r-d+1}\cdots x_r + x_{r+1}\cdots x_{r+d})$,
        where $n \geq r + d$ and $r \geq d \geq 3$, or
        \item $f' = x_1x_2\cdots x_{r-2}(x_{r-1}x_{r} + x_{r+1}x_{r+2} + \cdots + x_{r+2d -3}x_{r+2d - 2})$, where $n - r + 2 \geq 2d \geq 2$.
    \end{enumerate}
    In either case, $f$ has weight $2^{n-r+1}-2^{n-r+1-d}$.
\end{thm}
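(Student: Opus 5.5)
This theorem is quoted from Kasami and Tokura~\cite{kasami1970}, and the paper takes it as given. If I had to prove it myself, I would argue by induction on $r$, and for fixed $r$ on $n$, peeling off linear factors of $f$. The argument has two parts.

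\textbf{The listed functions have the stated weight.} For type~1, the factor $x_1\cdots x_{r-d}$ restricts to a $(n-r+d)$-dimensional flat. On that flat, $x_{r-d+1}\cdots x_r+x_{r+1}\cdots x_{r+d}$ is the sum of two degree-$d$ monomials in disjoint variables. Its weight is $2\cdot 2^{n-r}(2^{d}-1)2^{-d}\cdot$ (routine count), which gives $2^{n-r+1}-2^{n-r+1-d}$. For type~2, the bracketed quadratic form has rank $2d-2$ on the $(n-r+2)$-dimensional flat $x_1=\dots=x_{r-2}=1$. Dickson's theorem then gives weight $2^{n-r+1}-2^{n-r+1-d}$.

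\textbf{The base case $r=2$.} A nonzero quadratic $f$ of weight $<2^{n-1}$ is, by Dickson's classification, affinely equivalent to $x_1x_2+\dots+x_{2h-1}x_{2h}$ with an optional affine term. Its weight is $2^{n-1}$ or $2^{n-1}\pm 2^{n-1-h}$. Weight below $2^{n-1}$ forces the form $\sum x_{2i-1}x_{2i}$ without a linear term. This is type~2 with $r=2$ and $d=h+1$ ($d=1$ corresponds to the zero bracket being excluded, or to an affine function when $r=1$).

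\textbf{The inductive step for $r\ge 3$.} The key claim is that a nonzero $f\in\RM(r,n)$ with $\mathrm{wt}(f)<2^{n-r+1}$ satisfies one of two alternatives. Either
\begin{itemize}
\item[(a)] its support lies in an affine hyperplane $\{\ell=1\}$, so $f=\ell\cdot g$; or
\item[(b)] $f$ is equivalent to $x_1\cdots x_r+x_{r+1}\cdots x_{2r}$.
\end{itemize}
In case (a), the restriction $g|_{\{\ell=1\}}$ lies in $\RM(r-1,n-1)$. This holds because, on the support hyperplane, the degree drops: $f$ vanishes off it, so the top-degree part must be divisible by $\ell$. The restriction has the same weight $<2^{(n-1)-(r-1)+1}$, so induction yields type~1 or~2 for $(r-1,n-1)$. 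Multiplying back by $\ell$ gives the same type for $(r,n)$, with the same $d$.

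\textbf{Proving the dichotomy.} I would use derivatives. For every $a\neq 0$, $D_af\in\RM(r-1,n)$, and $\mathrm{wt}(D_af)\le 2\,\mathrm{wt}(f)<2^{n-r+2}=2\cdot d_{\min}(\RM(r-1,n))$. So each nonzero $D_af$ is itself a low-weight codeword of $\RM(r-1,n)$, and by induction it is already classified; in particular its support is small and structured.

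I would also split along hyperplanes. Write $f=(f_0,f_1)$ on $H$ and $H^c$, with each half in $\RM(r,n-1)$ of weight $0$ or $\ge 2^{n-r-1}$. Since $\mathrm{wt}(f)<4\cdot 2^{n-r-1}$, for every hyperplane either one half is zero, or both halves are nonzero and at least one lies below $2\cdot 2^{n-r-1}$. In the latter case that half is classified by induction on $n$.

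Combining the two constraints, I would show the following. If no hyperplane contains the support, then the support must be the union of two disjoint $(n-r)$-flats whose directions are complementary. This is exactly case (b), with $d=r$.

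\textbf{Main obstacle.} I expect the hardest step to be this structural dichotomy: excluding a support that meets every hyperplane, other than the two-flat configuration of (b). My first attempt would go through the counting identity
\[
\sum_{a}\mathrm{wt}(D_af)=2\,\mathrm{wt}(f)\bigl(2^n-\mathrm{wt}(f)\bigr),
\]
combined with the fact that each nonzero term is at least $2^{n-r+1}$. Together these bound the number of directions $a$ with $D_af\neq 0$. This should force the set $\{a: D_af=0\}$, which is a linear space of periods, to have dimension at least $n-r$, making $f$ invariant under an $(n-r)$-space $V$. Then $f$ is a function on $\F_2^n/V$ of dimension $r$, and its support is a union of at most one or two cosets of $V$. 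This yields directly either a product of $r$ affine factors (weight $2^{n-r}$, type~2 with $d=1$) or two cosets, giving types~1 and~2 after re-coordinatising. If the period space turns out smaller than $n-r$, I would instead fall back to the hyperplane-splitting induction.
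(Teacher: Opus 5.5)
The paper does not prove this theorem; it is quoted from Kasami and Tokura~\cite{kasami1970}. The outer shell of your proposal is fine: the weight checks, the Dickson base case $r=2$, and the reduction in case (a), where $f=\ell g$ with $g\in\RM(r-1,n-1)$ of the same weight, so induction applies with the same type and the same $d$. There is an indexing slip. The bracket in type~2 has $d$ products, so the quadratic form has rank $2d$, not $2d-2$, and in the base case $d=h$, not $h+1$; with your indexing the weight would come out as $2^{n-r+1}-2^{n-r+2-d}$. The real problem is that the dichotomy (a)/(b) carries the entire content of the theorem, and it is not proved. The sentence \emph{combining the two constraints, I would show} states the conclusion but gives no argument for it.

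The route you do sketch fails beyond the minimum weight. From $\sum_a \mathrm{wt}(D_af)=2w(2^n-w)$ and $\mathrm{wt}(D_af)\ge 2^{n-r+1}$ for $D_af\neq 0$, you only get at least $2^n-2^{r-n}w(2^n-w)$ periods. This equals $2^{n-r}$ when $w=2^{n-r}$. For $w=3\cdot 2^{n-r-1}$ and $r\ge 3$ it equals $9\cdot 2^{n-r-2}-2^{n-1}<0$, which is vacuous. Worse, the conclusion you aim for is false. If $f$ were invariant under an $(n-r)$-space $V$, its support would be a union of cosets of $V$, forcing $w=2^{n-r}$; two cosets give weight $2^{n-r+1}$, which the hypothesis excludes. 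Concretely, $x_1\cdots x_{r-2}(x_{r-1}x_r+x_{r+1}x_{r+2})$ has period space exactly $\spann(e_{r+3},\dots,e_n)$, of dimension $n-r-2$. So the counting idea recovers only the case $d=1$ (minimum-weight codewords are flats). Every codeword with $d\ge 2$ falls into your unexecuted fallback, and that fallback is not routine. If the support lies in no affine hyperplane, then for every hyperplane both halves are nonzero, but induction on $n$ classifies only the lighter half. The heavier half can have weight nearly $3\cdot 2^{n-r-1}$, outside the inductive range for $\RM(r,n-1)$. For instance, splitting $x_1\cdots x_r+x_{r+1}\cdots x_{2r}$ along $x_1$ gives a half of weight $3\cdot 2^{n-r-1}-2^{n-2r+1}$. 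You would still have to exploit $f_0+f_1\in\RM(r-1,n-1)$, together with the classified derivatives, to force $f$ to be affinely equivalent to $x_1\cdots x_r+x_{r+1}\cdots x_{2r}$. That structural step is exactly what Kasami and Tokura's proof has to supply, and it is missing here.
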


Our equivalence result will rely on the following lemma which is a corollary of \cref{T: min-weights}. In particular, we specialize \cref{T: min-weights} to the case that $d = 1, 2$. We add a short proof for completeness.

\begin{lem}\label{L: min-weights}
    Let $r, n$ be positive integers satisfying $2 \leq r \leq n-2$. The minimum weight codewords of $\RM(r,n)$ have weight $2^{n-r}$ and are precisely the incidence vectors of $(n-r)$-flats in $\F_2^n$. The second minimum weight codewords of $\RM(r,n)$ have weight $3\cdot 2^{n-r-1}$ and are the incidence vectors of the symmetric difference of two $(n-r)$-flats in $\F_2^n$ intersecting in an $(n-r-2)$-flat.
\end{lem}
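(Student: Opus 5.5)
We specialize \cref{T: min-weights} to $d=1$ and $d=2$ and compute the supports of the resulting normal forms explicitly. Since invertible affine transformations map $s$-flats to $s$-flats and preserve the relevant intersection patterns, it suffices to identify the supports of the normal forms. The weight formula $2^{n-r+1}-2^{n-r+1-d}$ is strictly increasing in $d$, so every codeword of weight less than $2^{n-r+1}$ has one of the weights $2^{n-r}$ ($d=1$), $3\cdot 2^{n-r-1}$ ($d=2$), and so on. Because the minimum distance of $\RM(r,n)$ is $2^{n-r}$, the weight $2^{n-r}$ is the minimum. Case~1 of \cref{T: min-weights} requires $d\ge 3$, so for $d\in\{1,2\}$ only Case~2 occurs. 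Writing $d$ for the parameter there, Case~2 allows $d=1$ always and $d=2$ exactly when $n-r+2\ge 4$, i.e.\ $r\le n-2$, which is guaranteed by the hypothesis. Hence both weights are attained and there is nothing strictly between them.

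For $d=1$, the normal form is $f'=x_1\cdots x_{r-2}x_{r-1}x_r$. Its support is $\{x: x_1=\dots=x_r=1\}$, which is an $(n-r)$-flat. Conversely, every $(n-r)$-flat is the common solution set of $r$ independent affine equations $\ell_i=1$. Its indicator is therefore $\prod_{i=1}^r\ell_i$, which has degree at most $r$ and lies in $\RM(r,n)$. So the minimum weight codewords are exactly the incidence vectors of $(n-r)$-flats.

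For $d=2$, the normal form is $f'=x_1\cdots x_{r-2}(x_{r-1}x_r+x_{r+1}x_{r+2})$. Its support lies in the $(n-r+2)$-flat $B=\{x_1=\dots=x_{r-2}=1\}$, where it equals the support of $q=x_{r-1}x_r+x_{r+1}x_{r+2}$. Set $E_1=B\cap\{x_{r-1}=x_r=1\}$ and $E_2=B\cap\{x_{r+1}=x_{r+2}=1\}$. The indicator of $E_1$ is $x_1\cdots x_r$ and the indicator of $E_2$ is $x_1\cdots x_{r-2}x_{r+1}x_{r+2}$, and their sum is $f'$. Each $E_i$ is an $(n-r)$-flat, and $E_1\cap E_2$ is cut out by $r+2$ independent equations, so it is an $(n-r-2)$-flat. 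Since the sum of indicators over $\F_2$ is the indicator of the symmetric difference, $f'$ is the indicator of $E_1\triangle E_2$, with weight $2\cdot 2^{n-r}-2\cdot 2^{n-r-2}=3\cdot 2^{n-r-1}$.

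For the converse, let $E_1,E_2$ be $(n-r)$-flats meeting in an $(n-r-2)$-flat. Their indicators lie in $\RM(r,n)$ by the $d=1$ case, so their sum does too. That sum is the indicator of $E_1\triangle E_2$, of weight $3\cdot 2^{n-r-1}$, so it is a second minimum weight codeword. Affine invariance transfers the description from the normal form to every codeword of this weight.

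The main obstacle is the range bookkeeping in \cref{T: min-weights}: one must confirm that Case~2 with $d=2$ is admissible exactly when $r\le n-2$, and that the $r\ge2$ hypothesis makes the prefix $x_1\cdots x_{r-2}$ meaningful (it is empty when $r=2$). The rest is the direct computation above.
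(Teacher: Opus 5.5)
Your proposal is correct and follows the paper's proof. Like the paper, you specialize the Kasami--Tokura classification to $d=1,2$, where only Case 2 applies. You identify the support of $x_1\cdots x_r$ as an $(n-r)$-flat, and the support of $x_1\cdots x_r+x_1\cdots x_{r-2}x_{r+1}x_{r+2}$ as the symmetric difference of two $(n-r)$-flats meeting in an $(n-r-2)$-flat, then transfer to all such codewords by affine invariance. Beyond the paper, you also prove the converse explicitly: an $(n-r)$-flat's indicator is a product of $r$ affine forms, so it lies in the code, and so does the sum of two such indicators. This fills in the ``precisely'' direction that the paper leaves implicit.
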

\begin{proof}
    In this proof, we identify codewords of $\RM(r,n)$ with their corresponding Boolean functions. Moreover, denote by $e_i$ the $i$th standard basis vector of $\Fpnv$. Let $f$ be a minimum weight codeword of $\RM(r,n)$, so $d=1$ in \cref{T: min-weights}. It follows from case~2 of \cref{T: min-weights} that $f$ is affine equivalent to the Boolean function $f' = x_1\cdots x_r$. This is precisely the incidence function of the $(n-r)$-flat $A_{f'} = \spann(e_{r+1},\dots,e_n)+\sum_{i=1}^r e_i$. Since any affine transformation of $A_{f'}$ gives another $(n-r)$-flat of $\Fpnv$, $f$ is also the incidence function of an $(n-r)$-flat.\par

    Now let $f$ be a second minimum weight codeword of $\RM(r,n)$, so $d=2$. Then case 2~of \cref{T: min-weights} implies that $f$ is affine equivalent to the Boolean function $f'=x_1\cdots x_{r} + x_1\cdots x_{r-2}x_{r+1}x_{r+2}$. Write $f_1' = x_1\cdots x_r$ and $f_2' = x_1\cdots x_{r-2}x_{r+1}x_{r+2}$. Then $f_1'$ and $f_2'$ are the incidence functions of the $(n-r)$-flats $A_{f_1'} = \spann(e_{r+1},\dots,e_n) + \sum_{i=1}^r e_i$ and $A_{f_2'} = \spann(e_{r-1},e_{r},e_{r+3},\dots,e_n) + (\sum_{i=1}^{r-2} e_i + e_{r+1} + e_{r+2})$, respectively. Clearly $A_{f_1'}$ and $A_{f_2'}$ intersect in the $(n-r-2)$-flat $\spann(e_{r+3},\dots,e_n) + \sum_{i=1}^{r+2} e_i$, and since $f' = f_1' + f_2'$, the function $f'$ is precisely the incidence function of the symmetric difference of $A_{f_1'}$ and $A_{f_2'}$. The result now follows by the same arguments as above.
\end{proof}
 
It is immediate that any subcode of $\RM(r,n)$ which avoids all minimum weight codewords has minimum weight at least $3\cdot 2^{n-r-1}$.

\subsection{Grassmann graphs}
\label{subsec:graphs}
For integers $0 \le t \leq  k < n$, we consider the \emph{generalized Grassmann graph $J_q(n,k,t)$}, whose vertices are the $k$-spaces of $\F_q^n$. Two distinct vertices $U,W$ are adjacent if $\dim(U \cap W) \ge t$.
We call $J_q(n,k,k-1)$ the \emph{Grassmann graph}, which we denote by $J_q(n,k)$.
First, we give two standard definitions that can be found for example in \cite{algebraic_graph_theory}.

\begin{defn}
    If $\mathcal{C}$ is a set of vertices of a graph, such that no two vertices in $\mathcal{C}$ are adjacent, then $\mathcal{C}$ is called a \emph{coclique} or an \emph{independent set}.
\end{defn}

\begin{defn}
    A partition of the vertex set of a graph $G$ into cocliques is called a \emph{(proper) coloring} of the graph where the different color classes are the different cocliques. The smallest number of colors needed for such a coloring is called the \emph{chromatic number} of $G$, we denote it by $\chi(G)$.
\end{defn}

The generalized Grassmann graph has a natural connection to constant-dimension codes. A \emph{constant $k$-dimensional subspace code} is a set of $k$-spaces of $\F_q^n$ equipped with \textit{the} subspace distance
$$d_S(U,W)=2\bigl(k-\dim(U\cap W)\bigr).$$
If a code has minimum subspace distance at least $2(k-t+1)$, then any two distinct codewords satisfy $\dim(U\cap W)\le t-1$, so no two codewords are adjacent in $J_q(n,k,t)$. In other words, a constant-dimension code with minimum distance at least $2(k-t+1)$ is precisely a coclique in $J_q(n,k,t)$. It follows that a coloring of $J_q(n,k,t)$ is a partition of the entire set of $k$-spaces into constant-dimension codes with minimum distance at least $2(k-t+1)$. The chromatic number $\chi(J_q(n,k,t))$ therefore measures how many such codes are needed to decompose all $k$-spaces of $\F_q^n$. The study of the chromatic number of the generalized Grassmann graphs was only recently initiated in \cite{dhaeseleer2026chromaticnumbergrassmanngraphs}.

The Grassmann graphs exhibit an interesting duality which allows one to study the same question from two different perspectives. In particular, it is known that $J_q(n,m,t)\simeq J_q(n,n-m,n-2m+t)$ (see \cite{dhaeseleer2026chromaticnumbergrassmanngraphs} for a proof). The results in \cite{dhaeseleer2026chromaticnumbergrassmanngraphs} establish a lower bound for $\chi(J_q(n, k, t))$ for $1 \leq t < k \leq n$. As some of our results (\cref{prop:m ge n-k+1}, \cref{P: Code bound}) rely on this lower bound, we give the following proposition which restates this result for $q = 2$ and extends it to the edge cases where $t = 0$ or $t = k$.

\begin{prop}
\label{P: general bound chromatic number}
Let $1\leq t \leq k < n$ be integers. Then
\[
 \max \left\{ \Gauss{n-t}{ k-t },  \Gauss{2k-t}{ k-t } \right\} \leq \chi(J_2(n, k, t)) .
\]
For $t=0$, we have $\chi(J_2(n, k, t))=\Gauss{n}{k}$.
\end{prop}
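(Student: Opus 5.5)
The statement to prove is \cref{P: general bound chromatic number}: the lower bound for $1\le t\le k$ and the exact value for $t=0$. For $1\le t<k$ the bound is the $q=2$ case of the result in \cite{dhaeseleer2026chromaticnumbergrassmanngraphs}, which we quote. It remains to handle $t=k$ and $t=0$ directly. In each case the plan is to pin down the adjacency relation explicitly and read off the chromatic number.

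\emph{Case $t=k$.} Two distinct $k$-spaces $U,W$ are adjacent in $J_2(n,k,k)$ if and only if $\dim(U\cap W)\ge k$. For distinct $k$-spaces this forces $U=W$, which is impossible. So the graph has no edges and its chromatic number is $1$. The claimed lower bound is also $1$, because $\Gauss{n-k}{0}=\Gauss{k}{0}=1$. The inequality therefore holds, with equality.

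\emph{Case $t=0$.} Every pair of distinct vertices satisfies $\dim(U\cap W)\ge 0$, so $J_2(n,k,0)$ is the complete graph on the set of all $k$-spaces. Its chromatic number is the number of vertices, namely $\Gauss{n}{k}$. This is consistent with the general formula, since $\Gauss{n-0}{k-0}=\Gauss{n}{k}$ and $\Gauss{2k}{k}\le\Gauss{n}{k}$.

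The only step needing any care is checking that the two Gaussian-binomial expressions specialize correctly at $t=k$ and $t=0$. This is immediate from the convention $\Gauss{a}{0}=1$ and from monotonicity of $\Gauss{a}{k}$ in $a$ (note $2k\le n$ may fail, but the max is then attained by $\Gauss{n}{k}$ anyway). There is no substantive obstacle.
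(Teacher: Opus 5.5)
Your proof follows the paper's exactly: quote Proposition~3.1 of D'haeseleer, Pavese, Santonastaso and Taranchuk for $1\le t<k$, then settle two cases by hand. For $t=k$ the graph is edgeless, with bound $\Gauss{n-k}{0}=\Gauss{k}{0}=1$; for $t=0$ it is complete on $\Gauss{n}{k}$ vertices. Both direct cases are correct.

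Your closing consistency remark, however, is false, and it points at a real defect that the paper's own proof shares. By the monotonicity you invoke, $2k>n$ gives $\Gauss{2k}{k}>\Gauss{n}{k}=\chi(J_2(n,k,0))$. So the maximum is attained by $\Gauss{2k}{k}$, and the lower-bound formula does not extend to $t=0$ in that range. This is harmless for $t=0$, since only the exact value is claimed there.

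The same failure occurs for $1\le t<k$. The bound $\Gauss{2k-t}{k-t}=\Gauss{2k-t}{k}$ comes from the clique formed by all $k$-subspaces of a fixed $(2k-t)$-space, which exists only if $2k-t\le n$. If $2k-t>n$, any two $k$-spaces of $\F_2^n$ meet in dimension at least $2k-n>t$. Then $J_2(n,k,t)$ is complete and $\chi=\Gauss{n}{k}<\Gauss{2k-t}{k-t}$. For example, $J_2(4,3,1)$ is the complete graph $K_{15}$, whereas $\Gauss{5}{2}=155$.

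So the inequality as stated is false for such parameters. The step ``quote the known result for all $1\le t<k<n$'' is valid only under the additional hypothesis $2k-t\le n$, which you (and the paper) should state explicitly. Nothing downstream breaks, because the applications satisfy it. In \cref{prop:m ge n-k+1} one has $t=k-1$, so $2k-t=k+1\le n$. In \cref{P: Code bound} one has $k=n-r$ and $t=n-r-2$, so $2k-t=n-r+2\le n$ because $r\ge 2$.
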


\begin{proof}
    When $1 \leq t < k < n$, this result is precisely given by \cite[Proposition 3.1]{dhaeseleer2026chromaticnumbergrassmanngraphs}. The only remaining cases are:
    \begin{enumerate}
        \item $t= 0$: In this case, the graph $J_2(n, k, 0)$ is the complete graph on $\Gauss{n}{k}$ vertices. Its chromatic number is the same as the number of vertices of the graph. 
        \item $t = k$: In this case, the graph $J_2(n, k, k)$ is the empty graph on $\Gauss{n}{k}$ vertices and has chromatic number 1. Again, the lower bound holds and is tight. \qedhere
    \end{enumerate} 
\end{proof}
The results of this paper have direct implications for the chromatic number of $J_2(n, k)$. The best-known general bounds for $q = 2$ follow from the work in \cite{dhaeseleer2026chromaticnumbergrassmanngraphs,dhaeseleer2025chromaticnumbergrassmanngraphs}.

\begin{prop}\label{P: bounds_chromatic}
    Let $k < n$ be positive integers. Then
    $$
\max \left\{\Gauss{k+1}{1}, \Gauss{n-k+1}{1} \right \} \leq \chi(J_2(n, k)) \leq \Gauss{n}{1}.
$$
\end{prop}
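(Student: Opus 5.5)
The plan is to get the lower bound from \cref{P: general bound chromatic number} with $t=k-1$, and the upper bound from an explicit coloring of $J_2(n,k)$ by witness values of Carlet's sum-free function.

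\textbf{Lower bound.} Since $J_2(n,k)=J_2(n,k,k-1)$, we apply \cref{P: general bound chromatic number} with $t=k-1$. For $k\ge 2$ we have $1\le t<k<n$, and the proposition gives
\[
\max\left\{\Gauss{n-k+1}{1},\Gauss{k+1}{1}\right\}\le \chi(J_2(n,k)),
\]
because $n-t=n-k+1$, $2k-t=k+1$ and $k-t=1$. For $k=1$ we have $t=0$, so the same proposition gives $\chi(J_2(n,1))=\Gauss{n}{1}$. This dominates both $\Gauss{2}{1}=3$ and $\Gauss{n}{1}$, so the lower bound holds in this case too.

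\textbf{Upper bound.} We colour with witnesses of a $k$th-order sum-free $(n,n)$-function. By \cref{prop:Carlet_kth-order}, $F=F_{k,1}(x)=x^{2^k-1}$ is $k$th-order sum-free on $\F_2^n$ for every $k\le n$.

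1. Colour each $k$-space $U\subseteq\F_2^n$ by $\omega_F(U)$. Since $U$ is a $k$-flat, sum-freeness gives $\omega_F(U)\in\F_2^n\setminus\{0\}$, so at most $2^n-1=\Gauss{n}{1}$ colours are used.

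2. Let $U\neq W$ be adjacent $k$-spaces. Then $V=U\cap W$ is a $(k-1)$-space, and we can write $U=V\cup(V+a)$ and $W=V\cup(V+b)$ with $V+a\neq V+b$.

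3. By additivity of the witness over disjoint unions,
\[
\omega_F(U)+\omega_F(W)=\omega_F(V+a)+\omega_F(V+b).
\]
By \cref{prop:k-1_flats}, this sum is non-zero, so $U$ and $W$ receive different colours.

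There is no real obstacle. The only points needing care are the edge case $k=1$ in the lower bound and the check that adjacency in $J_2(n,k)$ is exactly the situation covered by \cref{prop:k-1_flats}.
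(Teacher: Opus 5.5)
Your proof is correct and follows the paper's route. The lower bound is \cref{P: general bound chromatic number} with $t=k-1$, and you correctly handle $k=1$ separately through the $t=0$ case, since the general bound needs $t\ge 1$. The upper bound is the witness coloring of \cref{L: kth-order coloring} and \cref{T: kth-order coloring} applied to Carlet's function $F_{k,1}(x)=x^{2^k-1}$; the paper itself remarks that the bound $\Gauss{n}{1}$ arises as this special case.
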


The upper bound above specifically comes from \cite{dhaeseleer2025chromaticnumbergrassmanngraphs}, and we highlight that the method the authors used to prove this coincides with Carlet's work on $k$th-order sum-free functions~\cite{carlet2025generalizations}. In particular, the function $F_{k, 1}$ from \cref{prop:Carlet_kth-order} is shown to be $k$th-order sum-free by relating the sum of $F_{k,1}$ over a $k$-space to a special determinant. This same determinant (and its generalization over $\F_q$) is exactly what was used in \cite{dhaeseleer2025chromaticnumbergrassmanngraphs} to attain the upper bound for $\chi(J_q(n, k))$. We remark that the upper bound of \cref{P: bounds_chromatic} will become a special case of \cref{T: kth-order coloring}.

\begin{lem}\label{L: kth-order coloring}
    Let $k < n$ be positive integers and $F$ be a $k$th-order sum-free $(n, m)$-function. Then for any two distinct $k$-flats $A_1, A_2 \subset \F_2^n$ that intersect in a $(k-1)$-flat, we have $\omega_F(A_1)\neq \omega_F(A_2)$. 
\end{lem}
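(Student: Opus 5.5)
The plan is to reduce the statement directly to \cref{prop:k-1_flats}, or alternatively to argue from scratch with a symmetric-difference identity. Let $A_1, A_2$ be distinct $k$-flats meeting in a $(k-1)$-flat $B = A_1\cap A_2$. Write $B = U + c$ with $U$ a $(k-1)$-space. Since $A_1 \supseteq B$ and $\dim A_1 = k$, the flat $A_1$ is the disjoint union of $B$ and a parallel coset $U+a_1$. Likewise $A_2 = B \,\dot\cup\, (U+a_2)$. Because $A_1 \neq A_2$ and $A_1\cap A_2 = B$ exactly, the cosets $U+a_1$ and $U+a_2$ are distinct and both differ from $B$.

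Additivity of the witness over disjoint unions then gives
\[
\omega_F(A_1) + \omega_F(A_2) = \omega_F(U+a_1) + \omega_F(B) + \omega_F(U+a_2) + \omega_F(B) = \omega_F(U+a_1)+\omega_F(U+a_2),
\]
using characteristic $2$. The right-hand side is nonzero by \cref{prop:k-1_flats}, applied with $k\ge 1$ to the distinct cosets $U+a_1\neq U+a_2$ of the $(k-1)$-space $U$.

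For a self-contained argument one can avoid \cref{prop:k-1_flats}. The set $C = (U+a_1)\cup(U+a_2)$ is itself a $k$-flat, namely $\spann(U, a_1+a_2) + a_1$, since $a_1 + a_2 \notin U$. Hence $\omega_F(A_1)+\omega_F(A_2) = \omega_F(C)\neq 0$ by $k$th-order sum-freeness.

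The only step needing care is checking that $U+a_1 \neq U+a_2$. If they were equal, then $A_1 = A_2$, contradicting distinctness. I expect no real obstacle. The hypothesis $k<n$ only guarantees that such pairs of flats exist; in the case $k=1$, $U=\{0\}$ and the argument reduces to injectivity of $F$.
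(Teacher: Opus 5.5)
Your proposal is correct and is essentially the paper's proof: write $A_1$ and $A_2$ as the common $(k-1)$-flat together with distinct parallel cosets of $U$, cancel the shared witness in characteristic $2$, and conclude with \cref{prop:k-1_flats}. Your self-contained variant, noting that $(U+a_1)\cup(U+a_2)$ is itself a $k$-flat, simply unpacks the content of that proposition.
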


\begin{proof}
Let $F$ be a $k$th-order sum-free $(n, m)$-function.
Since $A_1$ and $A_2$ intersect in a $(k-1)$-flat, there exists a subspace $U$ and vectors $a, b, c \in \F_2^n$ such that $A_1 = (U+a)\cup(U+b)$ and $A_2 = (U+a)\cup (U+c)$. By \cref{prop:k-1_flats}, we have that $\omega(U+b) \neq \omega(U+c)$. Hence,
\begin{equation*}
    \omega(A_1) = \omega(U+a) + \omega(U + b)\neq \omega(U+a) + \omega(U+c) = \omega(A_2).
 \qedhere
\end{equation*}
\end{proof}

\begin{prop}\label{T: kth-order coloring}
    Let $k < n$ be positive integers. Assume there exists a $k$th-order sum-free $(n,m)$-function. Then there exists a coloring of $J_2(n,k)$ with $\Gauss{m}{1}$ colors.
\end{prop}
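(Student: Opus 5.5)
The plan is to colour each vertex of $J_2(n,k)$ by its witness with respect to $F$. Concretely, I would define the colouring $c(U) = \omega_F(U)$ for every $k$-space $U \subseteq \F_2^n$. There are two things to check: that the colouring uses at most $\Gauss{m}{1}$ colours, and that adjacent vertices receive different colours.

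For the first point, every $k$-space is in particular a $k$-flat. Since $F$ is $k$th-order sum-free, $\omega_F(U)\neq 0$ for every vertex $U$. Hence $c$ takes values in $\Fpmv\setminus\{0\}$, which has exactly $2^m-1=\Gauss{m}{1}$ elements.

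For the second point, take two vertices $U,W$ that are adjacent in $J_2(n,k)$. They are distinct $k$-spaces with $\dim(U\cap W)=k-1$. Both are $k$-flats, and their intersection $U\cap W$ is a $(k-1)$-space, hence a $(k-1)$-flat. Applying \cref{L: kth-order coloring} with $A_1=U$ and $A_2=W$ gives $\omega_F(U)\neq\omega_F(W)$, so $c(U)\neq c(W)$. Each colour class is therefore a coclique.

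Together these show that $c$ is a proper colouring of $J_2(n,k)$ with at most $\Gauss{m}{1}$ colours. Using exactly $\Gauss{m}{1}$ colours is allowed, since some classes may simply be empty. The argument contains no real obstacle: all the work is in \cref{L: kth-order coloring} (via \cref{prop:k-1_flats}), and the only point to state explicitly is that sum-freeness excludes the colour $0$.
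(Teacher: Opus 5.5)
Your proposal is correct and follows the paper's proof exactly: colour each $k$-space by its witness $\omega_F(U)$, use sum-freeness to see the colours are non-zero elements of $\F_2^m$, and invoke \cref{L: kth-order coloring} to separate adjacent vertices. You also state explicitly why the colour $0$ never occurs, which the paper leaves implicit.
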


\begin{proof}
    Let the colors be the non-zero elements of $\F_2^m$. Assign to each $k$-space $U$ the color which is given by its witness $\omega(U)$ under $F$. \cref{L: kth-order coloring} implies that if $U, V \subset \F_2^n$ are two $k$-spaces which intersect in a $(k-1)$-space, then they have distinct witnesses. Hence, we have a valid coloring of $J_{2}(n, k)$ using $\Gauss{m}{1}$ colors.
\end{proof}

The only case where the chromatic number of $J_2(n, k)$ is known precisely is for $k = 2$. In particular, the results of \cite{Meszka2013, parallelisms_1973}  imply that 
\[
\chi(J_{2}(n, 2)) = 
\begin{cases}
    \Gauss{n-1}{1} &  \text{if $n$ is even,}\vspace{.5em}\\
    \Gauss{n-1}{1} + 3 &  \text{if $n$ is odd.}
\end{cases}
\]
When $n$ is even, a number of inequivalent colorings are known, see \cite{heering2025lineparallelismspgn2preparatalike} for a summary of known colorings. The largest class of optimal colorings when $n$ is even arises from the existence of special functions called \textit{crooked functions}~\cite{heering2025lineparallelismspgn2preparatalike}.
Though we will not go into a detailed discussion of crooked functions, we highlight that a sufficient condition for crookedness is that the function is a quadratic APN permutation. In the terms of this paper, the main result of \cite{heering2025lineparallelismspgn2preparatalike} states that $(n-1, n-1)$ functions which are quadratic and $\set{1,2}$-order sum-free can produce optimal colorings of $J_2(n, 2)$ using $\Gauss{n-1}{1}$ colors.

The key result we obtain in \cref{sec:coloring} is a generalization of the coloring result in \cite{heering2025lineparallelismspgn2preparatalike}.  In particular, \cref{T: k and k-1-order sum-free gives coloring} gives an explicit coloring of $J_2(n, k)$ using $\Gauss{m}{1}$ colors when a $\{k-1, k\}$-order sum-free $(n-1, m)$-function of algebraic degree $k$ exists.

\section{On the existence of $k$th-order sum-free functions}
\label{sec:sum-free_functions}

In this section, we present a lower bound on $m$ for which an $(n,m)$-function can be $k$th-order sum-free and thereby give a partial answer to a question by Carlet~\cite{carlet2025generalizations}. Note that $(n, m)$-functions which are $n$th-order sum-free (so $k=n$) exist for any $m \geq 1$. Hence, we focus on the case that $k < n$. We begin by establishing a first lower bound on $m$ in \cref{prop:m ge n-k+1}, whose proof is straightforward. We then improve upon this with the help of \cref{L: n to n-j} and \cref{L: kth-order-kth-degree} to obtain a strong bound in \cref{T: m lower bound}. The proof of this theorem relies on both the non-existence of APN $(n, n-1)$-functions and on \cref{P: k gives n-k}. We conclude this section by showing that the $k$th-order sum-free functions from \cref{prop:Carlet_kth-order} are non-degenerate.

\begin{prop}
\label{prop:m ge n-k+1}
    Let $k<n$ be positive integers and let $F$ be an $(n,m)$-function. If $F$ is $k$th-order sum-free, then $m \ge \max\{n-k+1, k+1\}$. 
\end{prop}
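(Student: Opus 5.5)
The plan is to turn $F$ into a proper coloring of a suitable Grassmann-type graph and then apply the chromatic lower bounds of \cref{P: general bound chromatic number} and \cref{P: bounds_chromatic}. Assign to each object the witness of $F$, which lies in $\F_2^m$.

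\textbf{The bound $m\ge k+1$.} By \cref{T: kth-order coloring}, $F$ yields a coloring of $J_2(n,k)$ with $\Gauss{m}{1}=2^m-1$ colors. Since $k<n$, \cref{P: bounds_chromatic} gives $\chi(J_2(n,k))\ge \Gauss{k+1}{1}=2^{k+1}-1$. Hence $2^m-1\ge 2^{k+1}-1$, that is, $m\ge k+1$.

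\textbf{The bound $m\ge n-k+1$.} The same chain of inequalities, using the other term $\Gauss{n-k+1}{1}$ in \cref{P: bounds_chromatic}, gives $2^m-1\ge 2^{n-k+1}-1$, so $m\ge n-k+1$. Together the two bounds give $m\ge\max\{n-k+1,k+1\}$.

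\textbf{An alternative, self-contained argument.} If one prefers to avoid relying on the upper-side bound, the second inequality can also be obtained directly via \cref{prop:k-1_flats}. Fix a $(k-1)$-space $U$. Its $2^{n-k+1}$ cosets $U+a$ have pairwise distinct witnesses in $\F_2^m$, so $2^m\ge 2^{n-k+1}$, which again gives $m\ge n-k+1$.

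The first inequality can likewise be checked by hand. Fix a $(k+1)$-space $W$. Any two distinct $k$-subspaces of $W$ meet in a $(k-1)$-space, so by \cref{L: kth-order coloring} the $2^{k+1}-1$ hyperplanes of $W$ have pairwise distinct witnesses. Each of these witnesses is nonzero because $F$ is $k$th-order sum-free. This forces $2^m-1\ge 2^{k+1}-1$.

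\textbf{Main obstacle.} The only delicate point is the hypothesis $k<n$, which is needed so that $(k+1)$-spaces, or equivalently the graph $J_2(n,k)$, exist. Beyond that there is no real difficulty; this is the straightforward bound that the paper announces.
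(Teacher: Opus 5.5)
Your main argument is the paper's proof: the witness map gives a proper coloring of $J_2(n,k)$ with $2^m-1$ colors by \cref{T: kth-order coloring}, and the lower bound of \cref{P: bounds_chromatic} then yields both inequalities. Your self-contained alternative is also correct. It exhibits the two cliques behind that cited bound: the cosets of a fixed $(k-1)$-space, which have pairwise distinct witnesses by \cref{prop:k-1_flats}, and the $2^{k+1}-1$ hyperplanes of a $(k+1)$-space, which have distinct nonzero witnesses by \cref{L: kth-order coloring}.
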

\begin{proof}
    Let $F$ be a $k$th-order sum-free $(n, m)$-function. By \cref{T: kth-order coloring} we can obtain a coloring of all $k$-spaces of $\F_{2}^n$ using $2^m-1$ colors such that any two $k$-spaces which meet in a $(k-1)$-space receive distinct colors. Hence, we have a valid coloring of $J_2(n, k)$. By \cref{P: bounds_chromatic}, we have that $\max\{ n-k+1, k+1\}\leq m$. 
\end{proof}

\begin{lem}\label{L: n to n-j}
Let $j$ be a nonnegative integer and $k, m, n$ be positive integers satisfying $k+j \le n$. If a $(k+j)$th-order sum-free $(n, m)$-function exists, then a $k$th-order sum-free $(n-j, m)$-function exists.
\end{lem}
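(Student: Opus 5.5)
By induction on $j$ it suffices to treat $j=1$; the case $j=0$ is trivial. So let $F$ be a $(k+1)$th-order sum-free $(n,m)$-function with $k+1\le n$. We want a $k$th-order sum-free $(n-1,m)$-function. The natural candidate is a derivative of $F$ restricted to a hyperplane.

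Fix a hyperplane $H\subseteq\F_2^n$, identified with $\F_2^{n-1}$, and a vector $a\notin H$. Define $G\colon H\to\F_2^m$ by
\[
G(x)=D_aF(x)=F(x)+F(x+a).
\]
Let $B\subseteq H$ be a $k$-flat. Then $B\cup(B+a)$ is a $(k+1)$-flat of $\F_2^n$. Indeed, $B=V+b$ with $V\subseteq H$ a $k$-space and $a\notin V$, so $B\cup(B+a)=(V\oplus\langle a\rangle)+b$. Therefore
\[
\omega_G(B)=\sum_{x\in B}\bigl(F(x)+F(x+a)\bigr)=\omega_F\bigl(B\cup(B+a)\bigr)\neq 0,
\]
using the $(k+1)$th-order sum-freeness of $F$. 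So $G$ is $k$th-order sum-free on $H\cong\F_2^{n-1}$. Composing with a linear isomorphism $\F_2^{n-1}\to H$ preserves flats and their dimensions, so we obtain a $k$th-order sum-free $(n-1,m)$-function.

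Iterating this construction $j$ times, each time passing to a hyperplane and taking a derivative, gives a $k$th-order sum-free $(n-j,m)$-function. The hypothesis $k+j\le n$ guarantees that at every stage the order $k+i$ satisfies $k+i\le n-(j-i)$, so each intermediate statement is meaningful.

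The only point needing care is that $B\cup(B+a)$ is a genuine $(k+1)$-flat, which holds because $a\notin H\supseteq V$. Beyond that, the argument is a direct computation with no real obstacle.
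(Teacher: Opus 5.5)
Your proof is correct and is essentially the paper's argument carried out one dimension at a time. Iterating your hyperplane-and-derivative step produces exactly the restriction of $D_{a_1}\cdots D_{a_j}F$ to a complement of $\operatorname{span}(a_1,\dots,a_j)$, which the paper writes down directly, using that $V+A$ is a $(k+j)$-flat.
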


\begin{proof}
    For $j=0$, the lemma is trivial. Let $j > 0$, and suppose that $F$ is a $(k+j)$th-order sum-free $(n, m)$-function. Let $v_1, \dots, v_j \in \F_{2}^n$ be linearly independent and consider the $j$th-order derivative $D_{v_1}\cdots D_{v_j}F$ of $F$ in direction $(v_1, \dots, v_j)$. Let $V = \text{span}(v_1, \dots, v_j)$ and $W$ be any $(n-j)$-space of $\Fpnv$ satisfying $V \cap W = \{ 0 \}$. We show that the restriction of $D_{v_1}\cdots D_{v_j}F$ to $W$ is a $k$th-order sum-free $(n-j, m)$-function. By construction, the restricted function is an $(n-j, m)$-function. If $A \subseteq W$ is a $k$-flat, then $V + A := \set{v+a : v \in V, a \in A}$ is a $(k +j)$-flat in $\Fpnv$. Therefore, 
    $$
    \sum_{a \in A}D_{v_1}\cdots D_{v_j}F(a) = \sum_{a \in A}\sum_{v \in V}F(v + a) = \sum_{x \in V+A} F(x) \neq 0
    $$ 
     since $F$ is $(k+j)$th-order sum-free.
\end{proof}

\begin{lem}\label{L: kth-order-kth-degree}
    Let $1\le k<n$, and let $F$ be an $(n,k+1)$-function. If $F$ is $k$th-order sum-free, then $\degalg(F) = k$.
\end{lem}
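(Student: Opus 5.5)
The plan is to bound the algebraic degree from both sides. The lower bound $\degalg(F)\ge k$ is immediate from \cref{prop:degree<k}. For the upper bound, by \cref{prop:degree<k} applied with $k+1$ in place of $k$, it suffices to show that $\omega_F(B)=0$ for every $(k+1)$-flat $B\subseteq\Fpnv$. The case $k+1=n$ is included, with $B=\Fpnv$. So I fix a $(k+1)$-flat $B$, set $c:=\omega_F(B)$, and aim to show $c=0$ by a counting argument inside $B$ that uses $m=k+1$.

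First, I would list the $k$-flats contained in $B$. Write $B=U+b$ with $U$ a $(k+1)$-space. The $k$-flats in $B$ are the cosets $H+b$ and $H+b'$, where $H$ runs over the $2^{k+1}-1$ hyperplanes of $U$ and $b'\in B\setminus(H+b)$. For each $H$ these two parallel $k$-flats partition $B$, so
\[
\omega_F(H+b')=\omega_F(H+b)+c .
\]
Next, any two non-parallel $k$-flats in $B$ have distinct directions $H\neq H'$. Since $\dim(H+H')=k+1=\dim B$, their intersection is a $(k-1)$-flat. By \cref{L: kth-order coloring}, non-parallel $k$-flats inside $B$ therefore have distinct witnesses. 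Because $F$ is $k$th-order sum-free, all of these witnesses are also non-zero.

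Now I count. Choose one flat $H+b$ for each direction $H$. These $2^{k+1}-1$ flats are pairwise non-parallel, so their witnesses are pairwise distinct non-zero vectors of $\F_2^{k+1}$. Since $\F_2^{k+1}$ has exactly $2^{k+1}-1$ non-zero vectors, the map $H\mapsto\omega_F(H+b)$ is a bijection onto $\F_2^{k+1}\setminus\{0\}$.

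Suppose for contradiction that $c\neq 0$, and fix any $H$. The parallel flat $H+b'$ has witness $\omega_F(H+b)+c$, which is non-zero by sum-freeness. It therefore equals $\omega_F(H'+b)$ for some hyperplane $H'$. The case $H'=H$ would force $c=0$, so $H'\neq H$. But then $H+b'$ and $H'+b$ are non-parallel flats in $B$ with equal witnesses, contradicting the previous paragraph. Hence $c=0$ for every $(k+1)$-flat $B$, which gives $\degalg(F)\le k$ and so $\degalg(F)=k$.

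The main step is the counting/pigeonhole argument. It depends on two facts: the witnesses of pairwise non-parallel $k$-flats in $B$ exhaust $\F_2^{k+1}\setminus\{0\}$, which is exactly where $m=k+1$ is used, and \cref{L: kth-order coloring} applies to every pair of non-parallel $k$-flats inside a $(k+1)$-flat.
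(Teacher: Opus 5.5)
Your proof is correct. Its first half matches the paper's: both fix a $(k+1)$-flat, take the $2^{k+1}-1$ $k$-flats through one of its points (your $H+b$, the paper's $V+a$), and use \cref{L: kth-order coloring} together with $m=k+1$ to show that their witnesses are exactly the non-zero vectors of $\F_2^{k+1}$.

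Where you differ is in how you extract $\omega_F(B)=0$ from this bijection.

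\begin{itemize}
\item The paper sums the witnesses, which gives $\sum_{y\in\F_2^{k+1}\setminus\{0\}}y=0$. It then double counts incidences. The base point lies in all $2^{k+1}-1$ of these flats, and every other point of the $(k+1)$-flat lies in exactly $2^k-1$ of them. Both numbers are odd, so the sum of the witnesses equals $\omega_F(A)$.
\item You instead bring in the parallel partner $H+b'$, whose witness is $\omega_F(H+b)+c$. You then apply \cref{L: kth-order coloring} a second time, to the non-parallel pair $H+b'$ and $H'+b$. You correctly justify that this pair meets in a $(k-1)$-flat because $H+H'=U$. So a non-zero $c$ forces a forbidden collision.
\end{itemize}

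Your route is a bit more elementary and local. It needs neither the incidence count nor the fact that the non-zero vectors of $\F_2^{k+1}$ sum to zero, and one direction $H$ already gives the contradiction. The paper's route has a different advantage: its identity $\sum_{V}\omega_F(V+a)=\omega_F(A)$ holds for every function $F$, and sum-freeness enters only to pin down the set of witnesses being summed.

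You also state the lower bound $\degalg(F)\ge k$ explicitly via \cref{prop:degree<k}, which the paper leaves implicit.
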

\begin{proof}
    To prove this result, we show that $F$ sums to 0 over any $(k+1)$-flat of $\Fpnv$. \cref{prop:degree<k} then implies that $F$ has algebraic degree $k$. Let $A\subseteq \Fpnv$ be a $(k+1)$-flat. Write $A = U+a$ for some $(k+1)$-space $U\subseteq\Fpnv$ and $a \in \F_{2}^n$. Denote by $S$ the set of all $k$-spaces of $U$. For two distinct $k$-spaces $V, W \in S$, the intersection $(V+a) \cap (W+a)$ is a $(k-1)$-flat. It follows from \cref{L: kth-order coloring} that $\omega(V+a) \neq \omega(W+a)$. So for fixed $a$, the witnesses $\omega(V+a)$ with $V \in S$ are distinct, and they are non-zero because $F$ is $k$th-order sum-free. The number of $k$-spaces $V$ in $S$ is $\Gauss{k+1}{k}=\Gauss{k+1}{1}=2^{k+1}-1$, so $|S|=2^{k+1}-1$. Therefore, the witnesses are exactly the non-zero elements of $\F_2^{k+1}$. Hence,
    \[
    \sum_{V \in S}\omega(V+a) = \sum_{y \in \F_{2}^{k+1}\setminus\set{0}}y = 0.
    \]
  Now, let $x\in A$. If $x=a$, any $k$-flat $V+a$ with $V\in S$ contains $x$. Hence, the number of $k$-flats $V+a$ with $V\in S$ that contain~$x$ is $|S|=2^{k+1}-1$ in this case. If $x\neq a$, a $k$-flat $V+a$ contains $x$ only if $V$ contains $x+a$. The number of $k$-spaces in the $(k+1)$-space $U$ that contain a fixed vector is $\Gauss{(k+1)-1}{k-1}=\Gauss{k}{1}=2^k-1$. Hence, the number of $k$-flats $V+a$ with $V\in S$ that contain~$x$ is $2^k - 1$ in this case.
    Therefore,
    \[
    0 = \sum_{V \in S} \omega(V + a) = (2^{k+1} - 1)F(a) + \sum_{y \in A\setminus \{ a \}} (2^k - 1)F(y) = \sum_{y \in A} F(y), 
    \]
    which concludes the proof.
\end{proof}

In particular, \cref{L: kth-order-kth-degree} implies the following. 
\begin{cor}
    Any $(n-1)$th-order sum-free $(n,n)$-function has algebraic degree $n-1$.
\end{cor}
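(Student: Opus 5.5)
The corollary is the special case $k=n-1$ of \cref{L: kth-order-kth-degree}, after first cutting the $n$ output coordinates down to $k+1=n$. Let $F$ be an $(n-1)$th-order sum-free $(n,n)$-function and put $k=n-1$, so $1\le k<n$ holds whenever $n\ge 2$ (for $n=1$ the statement is degenerate, and we set it aside). Then $k+1=n$, so $F$ is an $(n,k+1)$-function that is $k$th-order sum-free. \cref{L: kth-order-kth-degree} applies verbatim and gives $\degalg(F)=k=n-1$.

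No further steps are needed. \cref{prop:degree<k} already gives the lower bound $\degalg(F)\ge n-1$. The upper bound comes from the vanishing of $\omega_F$ on the unique $n$-flat $\Fpnv$ shown in the lemma's proof: there the $2^{n}-1$ hyperplane witnesses are forced to be exactly the non-zero elements of $\F_2^{n}$. This counting step is the only delicate part, and it is handled in the lemma. We should only check that its use of \cref{L: kth-order coloring} is legitimate, which requires $k<n$, and that is satisfied.
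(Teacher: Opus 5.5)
Your proposal is correct and matches the paper, which likewise obtains the corollary as the immediate specialization $k=n-1$ of \cref{L: kth-order-kth-degree}. There is nothing to ``cut down'', since an $(n,n)$-function is already an $(n,k+1)$-function. The case $n=1$ that you set aside also holds trivially: a zero-order sum-free $(1,1)$-function is the constant $1$, which has degree $0=n-1$.
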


We now prove our first theorem.
 
\begin{thm} \label{T: m lower bound}
    Let $k, m, n$ be positive integers with $2 \leq k \leq n-2$. If $F$ is a $k$th-order sum-free $(n, m)$-function, then $m \ge \max\{n-k+2, k+2\}$. 
\end{thm}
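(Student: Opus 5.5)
The plan is to start from \cref{prop:m ge n-k+1}, which already gives $m \ge \max\{n-k+1,k+1\}$. It then suffices to rule out the two boundary values $m = n-k+1$ and $m = k+1$. In both cases I will reduce to the non-existence of APN $(N,N-1)$-functions for $N\ge 3$, which is Nyberg's result cited in the introduction. The tools for the reduction are \cref{L: n to n-j}, \cref{L: kth-order-kth-degree} and \cref{P: k gives n-k}.

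\textbf{Case $m = n-k+1$.} Suppose $F$ is a $k$th-order sum-free $(n,n-k+1)$-function. Since $k \ge 2$, I apply \cref{L: n to n-j} with $j = k-2$, reading $k$ in that lemma as $2$. The condition $2+(k-2)=k\le n$ holds. This yields a second-order sum-free $(n-k+2,\, n-k+1)$-function, i.e.\ an APN $(N,N-1)$-function with $N = n-k+2 \ge 4$. Such a function does not exist, so $m \ne n-k+1$.

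\textbf{Case $m = k+1$.} Suppose $F$ is a $k$th-order sum-free $(n,k+1)$-function. By \cref{L: kth-order-kth-degree}, $\degalg(F) = k$. Then \cref{P: k gives n-k} gives an $(n-k)$th-order sum-free $(n,k+1)$-function $G$ of degree $n-k$. Since $n-k\ge 2$, I apply \cref{L: n to n-j} to $G$ with $j = n-k-2\ge 0$. This produces a second-order sum-free $(k+2,\,k+1)$-function, which is APN from $\F_2^{k+2}$ to $\F_2^{k+1}$ with $k+2\ge 4$. Again this contradicts Nyberg's non-existence result, so $m\ne k+1$.

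Combining the two cases with \cref{prop:m ge n-k+1} gives $m \ge \max\{n-k+2,k+2\}$. The only delicate point is making sure the non-existence statement for APN $(N,N-1)$-functions is invoked correctly. That statement is about second-order sum-free functions in the sense of \cref{def:sum-free}, which is exactly what \cref{L: n to n-j} outputs, and it needs $N>2$, which holds here. The rest is bookkeeping of the index ranges, which are guaranteed by $2\le k\le n-2$.
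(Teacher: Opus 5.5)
Your proposal is correct and follows essentially the same route as the paper: you start from \cref{prop:m ge n-k+1}, apply \cref{L: n to n-j} with $j=k-2$ to reach an APN $(n-k+2,n-k+1)$-function, and use \cref{L: kth-order-kth-degree} with \cref{P: k gives n-k} to reduce the case $m=k+1$ to the first case with $n-k$ in place of $k$. The only cosmetic difference is that the paper splits according to which term realizes the maximum and rules out only that boundary value, whereas you rule out both $m=n-k+1$ and $m=k+1$ unconditionally, which is equally valid.
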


\begin{proof} 
    We show that equality in the bound of \cref{prop:m ge n-k+1} cannot hold. First, suppose that $n -k \geq k$. Since $k \geq 2$, \cref{L: n to n-j} implies that if a $k$th-order sum-free $(n, n-k+1)$-function exists, then an APN $(n-k+2, n-k+1)$-function exists. But such functions do not exist unless $n - k + 1= 1$ or, equivalently, $n = k$. Hence if $k\le n-2$, there are no $k$th-order sum-free $(n, n-k+1)$-functions. 

    Second, let $k > n- k$, and suppose that $F$ is a $k$th-order sum-free $(n, k+1)$-function. According to \cref{L: kth-order-kth-degree}, $F$ has algebraic degree $k$. By \cref{P: k gives n-k}, this implies that there exists an $(n-k)$th-order sum-free $(n, k+1)$-function. Set $j = n - k$ and note that since $k \leq n-2$, we have $j \geq 2$. We now repeat the first argument using $j$ instead of $k$ and obtain that $j$th-order sum-free $(n, n-j+1)$-functions do not exist. Consequently, there are also no $k$th-order sum-free $(n, k+1)$-functions.
\end{proof}

We now show that the known examples of $k$th-order sum-free power functions $F_{k, j}$ from \cref{prop:Carlet_kth-order} are non-degenerate if $k<n$. Recall that 
    \[
    F_{k, j} = x^{\frac{2^{jk} - 1}{2^j-1}} = \left(x^{\frac{1}{2^j - 1}}\right)^{2^{jk}-1}.
    \]
    with $\gcd(j, n) = 1$. If $k=n$, then $F_{n,1}$ is degenerate as $F_{n,1}(x) = x^{2^n-1}$, and $F_{n,1}$ has image~$\F_2$.

\begin{thm}
\label{T: each Carlet function is nn}
  Let $k < n$ be positive integers. Each function $F_{k, j}$ in the family of $k$th-order sum-free functions from \cref{prop:Carlet_kth-order} is non-degenerate. 
\end{thm}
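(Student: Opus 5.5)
The plan is to show directly that every component function $f_v(x)=\tr(vF_{k,j}(x))$, $v\in\F_{2^n}^*$, has algebraic degree exactly $k$. By the definition of degeneracy this gives non-degeneracy. The upper bound $\le k$ is automatic once we know the 2-weight of the exponent is $k$. The lower bound is the real content, and I will read it off the univariate representation of $f_v$.

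\textbf{Step 1: reduce the exponent.} Write $d=\sum_{t=0}^{k-1}2^{jt}$. As functions on $\F_{2^n}$ we have $x^{2^{jt}}=x^{2^{jt \bmod n}}$, so $F_{k,j}(x)=x^{d'}$ with $d'=\sum_{t=0}^{k-1}2^{e_t}$ and $e_t=jt \bmod n$. Because $\gcd(j,n)=1$ and $k<n$, the residues $e_0,\dots,e_{k-1}$ are pairwise distinct. Hence $d'$ has 2-weight exactly $k$. Since $k<n$, we also get $0<d'<2^n-1$, so $x^{d'}$ is the reduced univariate form.

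\textbf{Step 2: expand the trace.} We have
\[
f_v(x)=\sum_{i=0}^{n-1}v^{2^i}x^{d'2^i \bmod (2^n-1)}.
\]
The key point is that the cyclotomic coset of $d'$ modulo $2^n-1$ has full size $n$. Then the $n$ exponents $d'2^i$ are pairwise distinct, and each coefficient $v^{2^i}$ is non-zero. This matters because if the coset had size $s<n$, terms would merge and their coefficients would become a relative trace of $v$, which could vanish for some $v$.

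To prove the full coset size, suppose $d'2^s\equiv d' \pmod{2^n-1}$ with $0<s<n$. Then the bit set $E=\{e_0,\dots,e_{k-1}\}\subseteq\mathbb Z_n$ satisfies $E+s=E$. Multiplying by $j^{-1} \bmod n$ turns $E$ into the interval $\{0,1,\dots,k-1\}$ and $s$ into some non-zero $s'$. But a proper interval of $\mathbb Z_n$ (one with $0<k<n$) is not invariant under a non-zero translation, since its unique "left endpoint" would move. This is a contradiction.

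\textbf{Step 3: conclude.} The univariate representation of $f_v$ has non-zero coefficients exactly at the exponents $d'2^i$, each of 2-weight $k$. Therefore $\degalg(f_v)=k$ for every $v\ne 0$, so no component has degree below $k$, and $F_{k,j}$ is non-degenerate.

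I expect the main obstacle to be Step 2's claim that the coset has full length, i.e.\ ruling out coefficient cancellation. The interval-translation argument after rescaling by $j^{-1}$ should settle it.
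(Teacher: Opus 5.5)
Your proof is correct, but it takes a different route from the paper.

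\textbf{The paper's route.} It works with the image set of $F_{k,j}$, which it identifies with the image of $x^{2^k-1}$, and uses a counting argument. The image has $(2^n-1)/(2^{\gcd(n,k)}-1)+1$ elements, while at most $2^{d+1}-1$ elements of $\F_{2^n}$ lie in proper subfields, where $d$ is the largest proper divisor of $n$. The case $n$ even, $k=n/2$ needs a separate argument. From this the paper concludes that the image contains a generator $\alpha$ of $\F_{2^n}$, hence the basis $1,\alpha,\dots,\alpha^{n-1}$, so no component $\tr(aF_{k,j})$ vanishes identically. It then appeals, without detailed justification, to two facts: the trace of a monomial is either $0$ or has the same degree as the monomial, and $\degalg F_{k,j}=k$ for $k<n$.

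\textbf{Your route.} You work purely with the exponent. The reduction $e_t=jt \bmod n$ proves the weight-$k$ claim explicitly. Rescaling by $j^{-1}$ and using the left-endpoint argument for intervals of $\mathbb Z_n$ shows the cyclotomic coset of $d'$ has full length $n$. That single fact gives both the non-vanishing and the exact degree of every component at once.

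\textbf{What each buys.} Your argument avoids the size estimates and the $k=n/2$ case split. It also makes rigorous the monomial step the paper leaves implicit: with a full coset there is no collapse to a relative trace that could vanish for some $v$. The paper's route additionally records a property of the image set alone, namely that it spans $\F_{2^n}$ and indeed contains a field generator. The spanning statement also follows from your result, since the non-zero reduced polynomial $\tr(vx^{d'})$ cannot vanish on all of $\F_{2^n}$.
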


\begin{proof}
    We first prove that the image set of $F_{k, j}$ is not contained in any proper subspace of $\F_{2^n}$. With this fact, it will quickly follow that $F_{k, j}$ is non-degenerate.

    Since $\gcd(j, n) = 1$, then $\gcd(2^j -1, 2^n - 1) = 1$. Hence, 
    $x^{\frac{1}{2^{j}-1}}$ is a permutation of $\F_{2^n}$ and so the image set of $F_{k, j}$ is the same as the image set of $F_{jk, 1} = x^{2^{jk}-1}$. The remainder of the proof only distinguishes cases when $\gcd(2^{jk}-1, 2^n -1)$ is distinct. Hence, without loss of generality, we assume that the function has the form $F_{k, 1}$.
    
     Let $I_k = \{ x^{2^{k}-1}: x \in \F_{2^n}\}$. We will show that $I_k$ is not contained in any proper subspace of $\F_{2^n}$. To prove this, it is sufficient to demonstrate that there is some element $\alpha \in I_{k}$ which is not contained in any proper subfield of $\F_{2^n}$. In particular, if $\alpha \in \F_{2^n}$ is not contained in any proper subfield, then $\alpha$ is a root of an irreducible polynomial of degree $n$ over $\F_2$. Hence, $\{ 1, \alpha, \alpha^2, \dots, \alpha^{n-1} \} $ forms a basis for $\F_{2^n}$ over $\F_2$. Now, if such an $\alpha$ is contained in $I_k$, then $\{1, \alpha, \alpha^2, \dots, \alpha^{n-1} \} \subseteq I_k$, because $I_k\setminus \{ 0 
      \}$ forms a cyclic subgroup of $\F_{2^n}^*$. This implies that the image of $F_{k, 1}$ is not contained in any proper subspace of $\F_{2^n}$.
      
    The number of elements in $\F_{2^n}$, which are contained in a proper subfield, is at most $2|\F'| -1 $, where $\F'$ is the largest proper subfield of $\Fpn$. Indeed, we have one distinct proper subfield of $\F_{2^n}$ for each proper divisor of $n$. If $d$ is the largest proper divisor, then the total number of elements contained in any proper subfield is certainly less than
    \[
        2^d + 2^{d-1} + \cdots +2 + 1 = 2^{d+1}-1.
    \]
    
    Our goal is to show that $2^{d+1}-1\leq |I_k|$ when $k \neq n/2$. This will imply that $I_k$ does contain an element not contained in any proper subfield of $\Fpn$. Observe that $|I_k| = (2^n-1)/(2^{\gcd(n,k)}-1) + 1$ which implies $2^{n/2}+2\leq |I_k|$ for each $k$.
    If $n$ is odd, then $d\le n/3$, and we always have $2^{d + 1}-1\leq 2^{n/2} +2$. 
    So now, let $n$ be even and therefore $d = n/2$. 
    Since $k \neq n/2$, then $\gcd(n, k) \leq n/3$. Then, $2^{2n/3}+2^{n/3}+2\le  (2^n-1)/(2^{\gcd(n,k)}-1) + 1=|I_k|$. As $2^{n/2+1}-1\leq 2^{2n/3}+2^{n/3}+2$, $I_k$ is not contained in any proper subspace of $\F_2^n$ in each of these cases. 
    
    Thus, we are left only with the case that $n$ is even and $k = n/2$. For $n=2$ and $k=1$, the function $F_{1,1}$ is a permutation and $I_1 = \F_4$. So suppose $n \ge 4$. Consider the image~$I_{n/2}$ of $x^{2^{n/2}-1}$ which is precisely the set of $2^{n/2} + 2$ roots of $x(x^{2^{n/2} + 1} - 1)$. We claim that the only elements of $I_{n/2}$ which are contained in $\F_{2^{n/2}}$ are the elements $0$ and $1$. The non-zero elements in $I_{n/2}\cap \F_{2^{n/2}}$ are of the form $y = x^{2^{n/2} - 1}$ and must satisfy
    \begin{equation}\label{EqInt}
    1 = y^{2^{n/2} - 1} = \left( x^{2^{n/2}-1}\right)^{2^{n/2}-1} = x^{2(1 - 2^{n/2})}.      
    \end{equation} 
    Therefore, if $y \in I_{n/2}\cap\F_{2^{n/2}}$, then $y = x^{2^{n/2} - 1}$ for some $x \in \F_{2^n}$ and (\ref{EqInt}) implies that $x^{-2} \in \F_{2^{n/2}}$. Now, $x^{-2} \in \F_{2^{n/2}}$ if and only if $x \in \F_{2^{n/2}}$, and hence $y = 1$. Therefore, $I_{n/2}\cap \F_{2^{n/2}} = \{0, 1\}$.
     But now, $I_{n/2}\setminus\{ 0, 1\}$ still has $2^{n/2}$ elements which must all be contained in a subfield, but are not contained in the largest subfield $\F_{2^{n/2}}$. The union of the remaining subfields has size at most $2|\F''| - 1$, where $\F''$ is the next largest subfield of $\F_{2^n}$. It follows that $|\F''| \leq 2^{n/2 - 1}$ and since $\F_2 \subseteq \F''$, then $I_{n/2}$, which has cardinality $2^{n/2} + 2$,
     contains some elements which are not contained in any proper subfield. Thus we have shown that the image of $F_{k,j}$ is not contained in any proper subspace of $\F_{2^n}$ for any valid choice of $k$ and $j$.

     Now, any component function of $F_{k, j}$ is given by $\tr(aF_{k, j})$ for some $a \in \F_{2^n}^*$. Because $F_{k, j}$ is a monomial, then either $\tr(aF_{k, j}) = 0$, or the algebraic degree of $F_{k, j}$ is the same as the algebraic degree of $\tr(aF_{k, j})$. We have shown that the image of each $F_{k, j}$ is not contained in any proper subspace of $\F_{2^n}$, hence $\tr(aF_{k, j})$ is never the zero function. Therefore $\degalg \tr(aF_{k, j}) = \degalg F_{k, j}$ for all $a \neq 0$ and since $\degalg F_{k, j}=k$ for $k<n$, this implies that $F_{k, j}$ is non-degenerate.
\end{proof}

\section{Equivalence of $k$th-order sum-free functions and Reed-Muller subcodes}
\label{sec:reed-muller}

In this section, we prove our main result, \cref{T: Equivalence}, which establishes an equivalence between large linear subcodes of $\RM(r,n)$ and $(n-r)$th-order sum-free functions. As a consequence, we obtain explicit constructions of large subcodes of $\RM(r,n)$ for any $2 \leq r \leq n-2$ which do not contain any minimum weight codewords of $\RM(r,n)$, see \cref{cor:construction}. We restate our main result.

\begin{manualtheorem}{T: Equivalence}
     Let $r,m, n$ be positive integers satisfying $2 \leq r \leq n-2$ and $m \leq n$. Any codimension $m$ (linear) subcode of the Reed-Muller code $\RM(r,n)$, which has minimum distance $3\cdot 2^{n-r-1}$, gives rise to a non-degenerate $(n-r)$th-order sum-free $(n, m)$-function. Conversely, any  non-degenerate $(n-r)$th-order sum-free $(n, m)$-function yields a codimension~$m$ (linear) subcode of $\RM(r,n)$ with minimum distance $3 \cdot 2^{n-r-1}$.
\end{manualtheorem}

We prove \cref{T: Equivalence} by a series of lemmas and propositions. First, we show that subcodes of $\RM(r,n)$, which do not contain any minimum weight codewords, give rise to $(n-r)$th-order sum-free functions.

\begin{lem}\label{L: Extract function}
    Let $r$ and $n$ be positive integers satisfying $2 \leq r \leq n-2$. Let $\mathcal{C} \subseteq \RM(r, n)$ be a linear subcode of codimension $m$, such that $\mathcal{C}$ does not contain any minimum weight codewords of $\RM(r,n)$. Then, there exists a non-degenerate $(n-r)$th-order sum-free $(n, m)$-function which can be constructed from $\mathcal{C}$. 
\end{lem}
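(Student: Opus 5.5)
The plan is to use duality. Since $\mathcal{C}\subseteq \RM(r,n)$ has codimension $m$, the quotient $\RM(r,n)/\mathcal{C}$ is isomorphic to $\F_2^m$. I fix a surjective linear map $\phi\colon \RM(r,n)\to \F_2^m$ with kernel exactly $\mathcal{C}$.

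Every linear functional on $\RM(r,n)$ is given by an inner product with some vector of $\F_2^{2^n}$, determined modulo $\RM(r,n)^\perp = \RM(n-r-1,n)$. Applying this to each of the $m$ coordinates of $\phi$ gives Boolean functions $f_1,\dots,f_m$ on $\F_2^n$ with
\[
\phi(g)=\Bigl(\sum_{x\in\F_2^n} g(x)f_i(x)\Bigr)_{i=1,\dots,m}\quad\text{for all } g\in\RM(r,n).
\]
Put $F=(f_1,\dots,f_m)$. Then $\phi(g)=\sum_{x\in\operatorname{supp}(g)}F(x)$. The coordinates of $F$ are only determined up to adding elements of $\RM(n-r-1,n)$, which are functions of degree at most $k-1$ with $k=n-r$. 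This matches the paper's remark that sum-freedom is stable under degree-$(k-1)$ equivalence.

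Next I show $F$ is $k$th-order sum-free. By \cref{L: min-weights}, the minimum weight codewords of $\RM(r,n)$ are exactly the incidence vectors $1_A$ of $k$-flats $A$. For such a flat,
\[
\omega_F(A)=\sum_{x\in A}F(x)=\phi(1_A).
\]
Since $1_A\in\RM(r,n)$ but $1_A\notin\mathcal{C}=\ker\phi$, we get $\omega_F(A)\neq 0$. This holds for every $k$-flat, so $F$ is $k$th-order sum-free.

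Then I check non-degeneracy, which I expect to be the main obstacle. Suppose some component $v\cdot F=\sum_i v_if_i$ with $v\neq 0$ has algebraic degree less than $k$. Then $v\cdot F\in\RM(k-1,n)=\RM(r,n)^\perp$, so for every $g\in\RM(r,n)$,
\[
v\cdot\phi(g)=\sum_x g(x)\,(v\cdot F)(x)=0.
\]
Thus $v\cdot\phi$ is the zero functional, contradicting the surjectivity of $\phi$. So every nonzero component has degree at least $k$, and $F$ is non-degenerate.

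The delicate point is setting up the identification carefully: the functional must be represented by the standard dot product over the coordinates indexed by $\F_2^n$, and orthogonality to $\RM(r,n)$ must be exactly the degree condition. The identity $\RM(r,n)^\perp=\RM(n-r-1,n)$ handles this. The hypothesis $2\le r\le n-2$ is used only to invoke \cref{L: min-weights}, so that "minimum weight codeword" means "incidence vector of an $(n-r)$-flat". The hypothesis $m\le n$ is not needed for this direction.
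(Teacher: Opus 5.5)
Your proof is correct and is essentially the paper's argument in dual form: your representing functions $f_1,\dots,f_m$ are exactly the rows of the matrix $M$ that the paper appends to a generator matrix of $\RM(n-r-1,n)$ to get a parity-check matrix of $\mathcal{C}$. Both the sum-freeness step ($\omega_F(A)=\phi(1_A)$, the paper's $Mf_A\neq 0$) and the non-degeneracy step (no nonzero combination of the $f_i$ lies in $\RM(n-r-1,n)$) coincide with the paper's. Your extension-of-functionals setup additionally justifies the existence of these rows, which the paper only asserts.
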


\begin{proof}
    Let $H$ be a parity-check matrix of $\RM(r,n)$, which is a generator matrix of $\RM(n-r-1,n)$. If $\mathcal{C}$ is a subcode of $\RM(r,n)$ of codimension $m$, then there exist $m$ linearly independent vectors $v_1, \dots, v_m$ such that $\spann(v_1,\dots,v_m) \cap \RM(n-r-1,n) = \set{0}$ which we may append to $H$ to create $H_\mathcal{C}$, the parity-check matrix of $\mathcal{C}$. 
    Let $M$ be the $m \times 2^n$ matrix whose rows are $v_1, \dots, v_m$, such that
    \[
    H_\mathcal{C} = \begin{bmatrix} 
    H \\ 
    M
    \end{bmatrix}.
    \]
    Note that for $f\in \text{RM}(r, n) \setminus \mathcal{C}$, we have $Hf=0$ and $Mf \neq 0$. As the columns of $H$ and $M$ are indexed by the vectors of $\F_{2}^n$, let $M_v$ denote the column of~$M$ corresponding to $v \in \F_2^n$. Now define an $(n,m)$-function $F$ by $F(v) = M_v$. We claim that $F$ is $(n-r)$th-order sum-free. Recall from \cref{L: min-weights} that the minimum weight codewords of $\RM(r,n)$ are precisely the incidence functions of the $(n-r)$-flats of $\Fpnv$. Let $A \subseteq \F_2^n$ be an $(n-r)$-flat, and denote the corresponding codeword by $f_A \in \RM(r,n)\setminus\mathcal{C}$.  Consequently, 
    \[
    \sum _{x \in A} F(x)  = Mf_A \neq 0,
    \]
    and $F$ is $(n-r)$th-order sum-free.

    We now prove that $F$ is non-degenerate. Recall that the rows of~$M$ are linearly independent and disjoint from $\RM(n-r-1,n)$. Hence, no nontrivial linear combination of the rows of~$M$ can be contained in $\RM(n-r-1,n)$ or, equivalently, no component function of $F$ has algebraic degree less than $(n-r)$. This completes the proof.
\end{proof}

\begin{remark} 
\cref{T: m lower bound} implies that if such a linear subcode $\mathcal{C}$ exists, then its codimension $m$ must satisfy  $m \geq \max\{n-r+2, r+2\}$.
\end{remark}

\begin{prop}\label{P: Code bound}
    Let $r$ and $n$ be positive integers satisfying $2 \leq r \leq n-2$. Suppose $\mathcal{C}\subseteq RM(r, n)$ is a subcode of codimension $m \leq n$. Then $\mathcal{C}$ has minimum distance $d(\mathcal{C}) \leq 3\cdot 2^{n - r -1}$. 
\end{prop}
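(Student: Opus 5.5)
The plan is to argue by contradiction and turn a good subcode into a proper coloring of a generalized Grassmann graph that needs too many colors. Put $k=n-r$, so $2\le k\le n-2$. Suppose $\mathcal C$ has codimension $m\le n$ and $d(\mathcal C)>3\cdot 2^{n-r-1}$. Then by \cref{L: min-weights}, $\mathcal C$ contains no codeword of $\RM(r,n)$ of the minimum weight $2^{k}$ and none of the second minimum weight $3\cdot 2^{k-1}$.

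Exactly as in the proof of \cref{L: Extract function}, I would extend a parity-check matrix $H$ of $\RM(r,n)$ by an $m\times 2^n$ block $M$ to get a parity-check matrix of $\mathcal C$. Then set $F(v)=M_v$, the column of $M$ indexed by $v$. For any $f\in\RM(r,n)$ we have $f\notin\mathcal C$ if and only if $Mf\neq 0$, and for the incidence vector $f_A$ of a flat $A$ we get $Mf_A=\omega_F(A)$. I use the two excluded weight classes as follows.
\begin{itemize}
\item Avoiding minimum weight codewords gives $\omega_F(A)\ne 0$ for every $k$-flat $A$, so $F$ is $k$th-order sum-free.
\item Avoiding second minimum weight codewords gives the following via \cref{L: min-weights}. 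If $A_1,A_2$ are $k$-flats meeting in a $(k-2)$-flat, then $f_{A_1}+f_{A_2}$ is such a codeword, so $\omega_F(A_1)\ne\omega_F(A_2)$.
\item If $A_1,A_2$ are distinct $k$-flats meeting in a $(k-1)$-flat, then $\omega_F(A_1)\ne\omega_F(A_2)$ already holds by \cref{L: kth-order coloring}.
\end{itemize}
Now restrict to linear $k$-spaces. Two distinct $k$-spaces with $\dim(U\cap W)\ge k-2$ meet in a linear space of dimension $k-1$ or $k-2$, so they receive distinct nonzero witnesses. Hence $U\mapsto\omega_F(U)$ is a proper coloring of $J_2(n,k,k-2)$ with at most $2^m-1\le 2^n-1$ colors.

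Next I invoke \cref{P: general bound chromatic number} with $t=k-2$. If $k=2$, then $t=0$ and $\chi(J_2(n,2,0))=\Gauss{n}{2}$. If $k\ge3$, then $t\ge1$ and
\[
\chi(J_2(n,k,k-2))\ \ge\ \max\left\{\Gauss{n-k+2}{2},\Gauss{k+2}{2}\right\}.
\]
It remains to show this quantity exceeds $2^n-1$, which contradicts the assumption.

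Write $a=\max\{n-k+2,k+2\}$, so $a\ge n/2+2$ and $a\ge 4$ (and $a=n$ when $k=2$). Then
\[
\Gauss{a}{2}=\frac{(2^a-1)(2^{a-1}-1)}{3}.
\]
This is roughly $2^{2a-1}/3\ge 2^{n+3}/3$, comfortably above $2^n$. The main obstacle is making this estimate rigorous for all small cases. The case $k=2$ needs $\Gauss{n}{2}>2^n-1$ for $n\ge4$, which is immediate. For $k\ge3$, I would check that $(2^a-1)(2^{a-1}-1)>3(2^{2a-4}-1)\ge 3(2^n-1)$ using $2a-4\ge n$. The first inequality holds for $a\ge4$, and the second uses $2a-4\ge n$, which follows from $a\ge n/2+2$.

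This contradiction shows $d(\mathcal C)\le 3\cdot 2^{n-r-1}$.
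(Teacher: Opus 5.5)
Your proof is correct and follows essentially the same route as the paper. Like the paper, you extract a sum-free $(n,m)$-function from the parity-check matrix, color the $(n-r)$-spaces by their witnesses to get a proper coloring of $J_2(n,n-r,n-r-2)$ with at most $2^m-1$ colors, and contradict \cref{P: general bound chromatic number}, treating $r=n-2$ via $t=0$. The differences are cosmetic: you handle the $(k-1)$-intersection case with \cref{L: kth-order coloring} instead of the weight-$2^{k}$ codeword argument, and your final estimate $\Gauss{a}{2}>2^{2a-4}-1\ge 2^n-1$ is a slightly cleaner, unified version of the paper's arithmetic.
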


\begin{proof}
    Recall from \cref{L: min-weights} that the minimum weight codewords are the incidence vectors of $(n-r)$-flats and the second weight codewords are precisely the symmetric difference of two minimum weight codewords whose corresponding $(n-r)$-flats intersect in an $(n-r-2)$-flat.
    
    By assumption, the codimension of $\mathcal{C}$ in $\RM(r,n)$ is $m \leq n$. Suppose for a contradiction that $d(\mathcal{C}) > 3\cdot 2^{n-r-1}$. By Lemma \ref{L: Extract function}, there exists an $(n-r)$th-order sum-free $(n, m)$-function~$F$ which corresponds to $\mathcal{C}$. We partition the Grassmannian of all $(n-r)$-spaces in~$\F_2^n$ according to their witness under the function $F$.
    Assume that there are two $(n-r)$-spaces which intersect in an $(n-r-1)$-space or an $(n-r-2)$-space such that their witnesses are equal. Now, the sum of their corresponding codewords is a codeword in $\mathcal{C}$ of weight at most $3\cdot 2^{n-r-1}$, but $d(\mathcal{C}) > 3\cdot 2^{n-r-1}$, a contradiction.
    Hence, we have a valid coloring of the generalized Grassmann graph $J_2(n, n-r, n-r-2)$ with $\Gauss{m}{1}$ colors. First consider the case $r<n-2$. Then, such a coloring, together with \cref{P: general bound chromatic number} implies
\[
  \max\left\{ \Gauss{r+2}{2}, \Gauss{n-r+2}{2} \right\}\le \chi(J_2(n, n-r, n -r-2)) \le \Gauss{m}{1}.
\]
 We note that the left-hand side is always at least $\Gauss{ \lceil\frac{n}{2}\rceil + 2}{2}$. Therefore,
$$
\frac{(2^{\frac{n}{2} + 2}-1)(2^{\frac{n}{2} + 1} - 1)}{3} \leq \Gauss{ \lceil\frac{n}{2}\rceil + 2}{2} \leq \Gauss{m}{1} \leq \Gauss{n}{1} = 2^n - 1,
$$
but $\frac{(2^{\frac{n}{2} + 2}-1)(2^{\frac{n}{2} + 1} - 1)}{3}>2^n - 1$. Hence, we obtain a contradiction.\par

Now, consider the case $r=n-2$. Then we obtain a valid coloring of the generalized Grassmann graph $J_2(n, 2, 0)$ with $\Gauss{m}{1}$ colors. However, according to \cref{P: general bound chromatic number}, we have $\chi(J_2(n, 2, 0)) = \Gauss{n}{2}$. As $m\leq n$, this is also a contradiction.
\end{proof}

\noindent \textbf{The Reed-Muller Subcode}: 
We can construct a subcode~$\mathcal{C}_F$ of $\RM(r,n)$ from a non-degenerate $(n-r)$th-order sum-free $(n,m)$-function $F$ as follows: Denote by $H$ a parity-check matrix of $\RM(r,n)$ (which is precisely a generator matrix of $\RM(n-r-1,n)$) and define $\mathcal{C}_F$ as the code with parity-check matrix
\[
\begin{bmatrix}
    H\\
    F(x)
\end{bmatrix}_{x\in\Fpnv}.
\]
For convenience, denote the $m\times2^n$ matrix~$[F(x)]_{x\in\Fpnv}$ by $M_F$ in the following proofs.

\begin{lem}
    Let $\mathcal{C}_F$ be as above. Then
    $
    \dim(\mathcal{C}_F) = \dim(\RM(r, n)) - m.
    $
\end{lem}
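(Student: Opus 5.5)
The plan is to compare the parity-check matrix $\begin{bmatrix} H \\ M_F \end{bmatrix}$ of $\mathcal{C}_F$ with the parity-check matrix $H$ of $\RM(r,n)$. Since $\mathcal{C}_F$ is the kernel of the larger matrix, we have
\[
\dim(\mathcal{C}_F) = 2^n - \operatorname{rank}\begin{bmatrix} H \\ M_F \end{bmatrix},
\qquad
\dim(\RM(r,n)) = 2^n - \operatorname{rank}(H).
\]
So it suffices to show that appending the $m$ rows of $M_F$ raises the rank by exactly $m$. Equivalently, the row space of $M_F$ has dimension $m$ and meets the row space of $H$ only in $\{0\}$. The row space of $H$ is $\RM(n-r-1,n)$.

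To prove this, take an arbitrary nontrivial linear combination of the rows of $M_F$, with coefficient vector $v\in\F_2^m\setminus\{0\}$. Its entry at coordinate $x$ is $v\cdot F(x)=f_v(x)$, so this row vector is exactly the value vector of the component function $f_v$.

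By non-degeneracy, $f_v$ has algebraic degree at least $n-r$. Vectors in $\RM(n-r-1,n)$ are value vectors of Boolean functions of degree at most $n-r-1$, and the algebraic normal form is unique. Hence the value vector of $f_v$ is not in $\RM(n-r-1,n)$; in particular it is nonzero.

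This gives both required facts at once. The rows of $M_F$ are linearly independent, and their span intersects $\RM(n-r-1,n)$ trivially. Therefore the rank increases by exactly $m$, which yields $\dim(\mathcal{C}_F)=\dim(\RM(r,n))-m$.

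The only point needing care is the identification of row combinations of $M_F$ with component functions, together with the uniqueness of the algebraic normal form. I expect no real obstacle; the sum-free property is not needed here, only non-degeneracy. As a final check, $\mathcal{C}_F\subseteq\RM(r,n)$ holds because its parity-check matrix contains $H$.
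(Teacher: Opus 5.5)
Your proof is correct and follows the same route as the paper: you show that $\operatorname{row}(M_F)$ has dimension $m$ and meets $\operatorname{row}(H)=\RM(n-r-1,n)$ only in $\{0\}$, so appending $M_F$ to $H$ raises the rank by exactly $m$. Identifying each nontrivial row combination with a component function $f_v$ is exactly the form of non-degeneracy the paper's argument needs, namely that every nonzero combination of rows, not merely each row, has degree at least $n-r$.
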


\begin{proof}
     It is immediate that $\dim(\mathcal{C}_F) \geq \dim(RM(r, n)) - m$. Since $F$ is non-degenerate, note that the rows of $M_F$ correspond to the evaluation of a set of $m$ linearly independent Boolean functions of algebraic degree at least $n - r$. Since the dual code of $\RM(r,n)$ is RM$(n - r- 1, n)$, our observations above imply $\text{row}(M_F)\cap RM(n-r-1, n) = \{0 \}$. Therefore,
     \begin{equation*}
    \dim(\mathcal{C}_F)
    = \dim(\RM(r, n)) - \operatorname{rank}(M_F)
    = \dim(\RM(r, n)) - m.
    \qedhere
\end{equation*}
\end{proof}

\begin{prop}\label{P: New Code}
    Let $n, m, r$ be positive integers with $2 \leq r \leq n-2$. Let $F$ be an $(n-r)$th-order sum-free $(n, m)$-function and let $\mathcal{C}_F \subseteq \RM(r, n)$ be the corresponding Reed-Muller subcode. Then $\mathcal{C}_F$ has minimum distance $d(\mathcal{C}_F) \geq 3\cdot 2^{n-r - 1}$. 
\end{prop}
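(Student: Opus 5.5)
The approach is to show that $\mathcal{C}_F$ contains no codeword of $\RM(r,n)$ of weight less than $3\cdot 2^{n-r-1}$. By \cref{L: min-weights}, the nonzero codewords of $\RM(r,n)$ of weight below $3\cdot 2^{n-r-1}$ are exactly the minimum weight codewords, namely the incidence vectors $f_A$ of $(n-r)$-flats $A\subseteq\Fpnv$. Berlekamp--Sloane and \cref{T: min-weights} confirm that there are no weights strictly between $2^{n-r}$ and $3\cdot 2^{n-r-1}$. Indeed, the possible weights below $2^{n-r+1}$ are $2^{n-r+1}-2^{n-r+1-d}$, which equals $2^{n-r}$ for $d=1$ and $3\cdot 2^{n-r-1}$ for $d=2$. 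So it suffices to show that no $f_A$ lies in $\mathcal{C}_F$.

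Since $\mathcal{C}_F\subseteq\RM(r,n)$, any nonzero codeword $c\in\mathcal{C}_F$ is a codeword of $\RM(r,n)$ with $M_F c=0$. Its weight is at least $2^{n-r}$. If its weight were less than $3\cdot 2^{n-r-1}$, then by the above $c=f_A$ for some $(n-r)$-flat $A$. We then compute directly
\[
M_F f_A=\sum_{x\in\Fpnv} f_A(x)F(x)=\sum_{x\in A}F(x)=\omega_F(A),
\]
which is nonzero because $F$ is $(n-r)$th-order sum-free. This contradicts $M_F c=0$, so every nonzero codeword of $\mathcal{C}_F$ has weight at least $3\cdot 2^{n-r-1}$, and $d(\mathcal{C}_F)\ge 3\cdot 2^{n-r-1}$. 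Non-degeneracy is not needed for this direction; it only controls the dimension.

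The only delicate point is the weight-gap claim: that no codeword of $\RM(r,n)$ has weight strictly between $2^{n-r}$ and $3\cdot 2^{n-r-1}$. This follows from \cref{T: min-weights}, since every weight below $2^{n-r+1}$ has the stated form and $d\ge 2$ already gives weight at least $3\cdot 2^{n-r-1}$. The hypothesis $2\le r\le n-2$ is exactly what makes \cref{L: min-weights} applicable. If $\mathcal{C}_F=\{0\}$, the bound holds vacuously.
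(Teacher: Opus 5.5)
Your proposal is correct and follows the paper's proof. You show $M_F f_A=\omega_F(A)\neq 0$ for every $(n-r)$-flat $A$, so $\mathcal{C}_F$ avoids all minimum weight codewords of $\RM(r,n)$, and the weight gap from \cref{T: min-weights} then gives $d(\mathcal{C}_F)\ge 3\cdot 2^{n-r-1}$. Your remark that non-degeneracy is only needed for the dimension count, not for this bound, is consistent with the paper.
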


\begin{proof}
     To prove our claim, we must show that no minimum weight codewords of $\RM(r, n)$ are contained in $\mathcal{C}_F$.  If $f$ is a minimum weight codeword of $\RM(r, n)$, then, by \cref{L: min-weights}, the support of $f$ corresponds precisely to some $(n-r)$-flat $A$ of $\F_2^n$. As $F$ is an $(n-r)$th-order sum-free function, we have
$$
 M_F\cdot f = \sum_{x\in A}F(x) \neq 0.
$$
 Hence, $d(\mathcal{C}_F)$ is at least the second minimum weight of codewords in $\RM(r,n)$, which according to Theorem \ref{T: min-weights}, is exactly $3\cdot2^{n - r - 1}$ for $2 \leq r \leq n-2$.
\end{proof}

 Applying Propositions \ref{P: Code bound} and \ref{P: New Code} to the code $\mathcal{C}_F$ completes the proof of Theorem \ref{T: Equivalence}.    
 Given that $(n-r)$th-order sum-free $(n,n)$-functions exist for all positive integers $r, n$ satisfying $r < n$, we immediately obtain a corollary which yields the construction of a new family of subcodes of the Reed-Muller codes.

\begin{cor}
\label{cor:construction}
    Let $r, n$ be positive integers satisfying $2 \leq r \leq n-2$ and define $F(x) = x^{2^{n-r}-1}$. Then $\mathcal{C}_F \subseteq \text{RM}(r, n)$ has codimension $n$ and minimum distance $3\cdot 2^{n-r - 1}$.
\end{cor}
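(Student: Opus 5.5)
The plan is to deduce \cref{cor:construction} directly from the converse direction of \cref{T: Equivalence}, i.e.\ from the construction of $\mathcal{C}_F$ together with the dimension lemma, \cref{P: New Code} and \cref{P: Code bound}. Set $k=n-r$, so $2\le k\le n-2$, and $F(x)=x^{2^{k}-1}=F_{k,1}(x)$. The first step is to check the hypotheses: by \cref{prop:Carlet_kth-order} with $j=1$ (certainly $\gcd(1,n)=1$), $F$ is a $k$th-order sum-free $(n,n)$-function, so $m=n$ and the condition $m\le n$ of \cref{T: Equivalence} holds.

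The second step is non-degeneracy. Since $k<n$, \cref{T: each Carlet function is nn} shows that $F_{k,1}$ is non-degenerate: every component function $\tr(aF)$ with $a\neq 0$ is nonzero and has algebraic degree equal to the $2$-weight of $2^k-1$, which is $k=n-r$. Therefore no nontrivial combination of the rows of $M_F$ lies in $\RM(n-r-1,n)$. The dimension lemma following the definition of $\mathcal{C}_F$ then gives $\dim \mathcal{C}_F=\dim\RM(r,n)-n$, so the codimension is exactly $n$.

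The third step is the minimum distance. \cref{P: New Code} gives $d(\mathcal{C}_F)\ge 3\cdot 2^{n-r-1}$, because $\mathcal{C}_F$ contains no incidence vector of an $(n-r)$-flat. For the reverse inequality, \cref{P: Code bound} applied with codimension $m=n\le n$ gives $d(\mathcal{C}_F)\le 3\cdot 2^{n-r-1}$. Hence equality holds.

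None of these steps is a real obstacle, since all of the substance sits in the cited results. The only point needing care is verifying the hypotheses of \cref{T: each Carlet function is nn}, namely $k<n$ and $j=1$; this is exactly where the restriction $r\ge 2$ (equivalently $k\le n-2$) enters, together with the nonemptiness of the range of $r$.
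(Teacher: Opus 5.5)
Your proof is correct and matches the paper's argument: take $F=F_{n-r,1}$, which is $(n-r)$th-order sum-free by \cref{prop:Carlet_kth-order} and non-degenerate by \cref{T: each Carlet function is nn}, and apply the converse direction of \cref{T: Equivalence} (the dimension lemma, \cref{P: New Code} and \cref{P: Code bound}). One small correction to your closing remark: \cref{T: each Carlet function is nn} only needs $k<n$, i.e.\ $r\ge 1$, so it is not where $2\le r\le n-2$ enters; that restriction is used in \cref{L: min-weights} (the second minimum weight $3\cdot 2^{n-r-1}$), on which \cref{P: New Code} and \cref{P: Code bound} rest.
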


\begin{remark}
    Note that our construction of a subcode of $\RM(r,n)$ generalizes the construction of codes from an APN function $F$ with parity-check matrix
    \[
        H_F = \begin{bmatrix}
            1\\
            x\\
            F(x)
        \end{bmatrix}_{x\in\Fpnv},
    \]
    because the two upper blocks of $H_F$ form a parity-check matrix of $\RM(n-2,n)$. These codes are not only interesting on their own, as they have optimal properties if $F$ is APN, but they are heavily relied on in testing computationally whether two arbitrary $(n,m)$-functions are CCZ-equivalent: Two $(n,m)$-functions $F,G$ are CCZ-equivalent if and only if the codes generated by $H_F$ and $H_G$ are equivalent~\cite{edel2009}. We refer to \cite{pott_survey} for more background on these codes. It will be interesting to study the connection between
    the equivalence of $k$th-order sum-free functions $F,G$ 
    and the equivalence of their associated codes $\mathcal{C}_F,\mathcal{C}_G$. Clearly, degree-$(k-1)$ equivalent $k$th-order sum-free functions produce equivalent codes, and we confirmed computationally that the third-order sum-free functions $x^7$ and $x^{21}$ in dimension $5$ from \cref{prop:Carlet_kth-order} produce inequivalent codes. But it is an open question whether or not code equivalence in this setting is actually more general than degree-$(k-1)$ equivalence.
\end{remark}

\section{New colorings of the Grassmann graph}
\label{sec:coloring}

In this section, our goal is to prove the following theorem. The coloring that has the desired property will be defined in \cref{D: coloring function}.

\begin{thm} \label{T: k and k-1-order sum-free gives coloring}
    Let $k$ and $n$ be positive integers satisfying $2\leq k < n$. Assume that there exists an $(n,m)$-function $F$, such that $F$ is 
 of algebraic degree $k$
and $\set{k-1,k}$-order sum-free.
    Then, there exists a coloring of the Grassmann graph $J_2(n+1,k)$ with $\Gauss{m}{1}$ colors.
\end{thm}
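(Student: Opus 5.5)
We split $\F_2^{n+1}=\F_2^n\times\F_2$ and let $H=\F_2^n\times\{0\}$ be the distinguished hyperplane, identified with $\F_2^n$; the colors are the non-zero elements of $\F_2^m$. Every $k$-space $U$ of $\F_2^{n+1}$ either lies in $H$, or meets $H$ in a $(k-1)$-space $W$. In the second case $U=(W\times\{0\})\cup((W+a)\times\{1\})$, and the $(k-1)$-flat $W+a\subseteq\F_2^n$ is uniquely determined by $U$. We define the coloring $c$ by
\[
c(U)=\omega_F(U)\ \text{ if } U\subseteq H,\qquad c(U)=\omega_F(W+a)\ \text{ otherwise.}
\]
Both values are non-zero, because $F$ is $k$th-order and $(k-1)$th-order sum-free. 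It remains to check that any two distinct $k$-spaces $U,U'$ meeting in a $(k-1)$-space receive distinct colors. We do this in three cases.

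\emph{Both in $H$.} This is exactly \cref{L: kth-order coloring}, applied to $F$ and $k$.

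\emph{Neither in $H$.} Write $U\cap H=W$ with affine part $W+a$, and $U'\cap H=W'$ with affine part $W'+a'$.
\begin{itemize}
\item If $W=W'$, then $W+a$ and $W+a'$ are distinct parallel $(k-1)$-flats, since otherwise $U=U'$. Then \cref{prop:k-1_flats}, using that $F$ is $k$th-order sum-free, gives $\omega(W+a)\neq\omega(W+a')$.
\item If $W\neq W'$, then $U\cap U'$ is a $(k-1)$-space not contained in $H$. Hence $W\cap W'$ has dimension $k-2$, and the part of $U\cap U'$ at height $1$ shows that $(W+a)\cap(W'+a')$ is a $(k-2)$-flat. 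So $W+a$ and $W'+a'$ are distinct $(k-1)$-flats meeting in a $(k-2)$-flat. \cref{L: kth-order coloring}, applied with $k-1$ in place of $k$ (valid since $F$ is $(k-1)$th-order sum-free and $1\le k-1<n$), gives distinct witnesses.
\end{itemize}

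\emph{Mixed case.} This case is the main obstacle and the only place where $\degalg(F)=k$ is used. Let $U\subseteq H$ and $U'\not\subseteq H$. Then $U\cap U'\subseteq H$, so $U\cap U'=W'$ and $U=W'\oplus\spann(b)$ for some $b$.
\begin{itemize}
\item For a fixed $(k-1)$-space $W$, the map $x\mapsto\omega_F(W+x)$ is a $(k-1)$th-order derivative of $F$. Since $F$ has degree $k$, this map is affine. We write $\omega_F(W+x)=\omega_F(W)+L_W(x)$ with $L_W$ linear.
\item Then $\omega_F(U)=\omega(W')+\omega(W'+b)=L_{W'}(b)$.
\item Consequently
\[
c(U)+c(U')=L_{W'}(b)+\omega(W')+L_{W'}(a')=\omega_F(W'+a'+b).
\]
\end{itemize}
The right-hand side is non-zero because $F$ is $(k-1)$th-order sum-free. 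This finishes the case analysis.

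The routine points to verify carefully are the intersection bookkeeping in the second case and the affineness of the $(k-1)$th-order derivative. The latter follows because each further derivative of order $k+1$ vanishes by \cref{prop:degree<k}.
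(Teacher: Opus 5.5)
Your proposal is correct and is essentially the paper's proof. Your coloring coincides with $c_F$ from \cref{D: coloring function}, since for $U\not\subseteq\HH$ the value $\Omega(U\setminus\HH)$ is exactly the witness of the height-one $(k-1)$-flat; your mixed case is the paper's Case~2 (with the affineness of $x\mapsto\omega_F(W+x)$ playing the role of \cref{L: special condition}); and your ``neither in $H$'' case merges the paper's Cases~3 and~4, with \cref{prop:k-1_flats} and \cref{L: kth-order coloring} for $k-1$ replacing the paper's direct cancellation.
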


As the chromatic number of the Grassmann graph is rather trivial for $k<2$, the condition on $k$ is not restrictive.
Recall that  for a flat $A\subseteq\Fpnv$ and an $(n,m)$-function $F$, we write $ \omega(A)=\sum_{x\in A} F(x)$. Our first lemma follows immediately from Proposition \ref{prop:degree<k}, which implies that any $(n, m)$-function of algebraic degree at most $k$ always sums to zero over any $(k+1)$-flat.

\begin{lem} \label{L: special condition} 
    Let $F$ be an $(n,m)$-function, let $k\geq 2$ and let $x, y \in \F_{2}^n$. Let $U$ be a $(k-1)$-space of $\F_2^n$. If $F$ has algebraic degree $k$, then $$\omega(U)+\omega(x+U)+\omega(y+U)=\omega(x+y+U).$$
    Furthermore, if $F$ is $(k-1)$th-order sum-free, then 
    $$
    \omega(U)+\omega(x+U)+\omega(y+U) \neq 0.
    $$
\end{lem}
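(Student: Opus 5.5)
The plan is to read both identities off the vanishing of sums over $(k+1)$-flats, which \cref{prop:degree<k} gives for functions of algebraic degree at most $k$. The first identity is equivalent to
\[
\omega(U)+\omega(x+U)+\omega(y+U)+\omega(x+y+U)=0 ,
\]
since we work in characteristic $2$. The four cosets $U$, $x+U$, $y+U$, $x+y+U$ are the cosets of $U$ inside $W=\spann(U,x,y)$. So I will split into cases according to how $x$ and $y$ sit relative to $U$.

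\emph{Generic case.} Suppose $x,y$ are linearly independent modulo $U$, that is, $x\notin U$, $y\notin U$ and $x+y\notin U$. Then $W$ is a $(k+1)$-space; in particular $k+1\le n$. It is the disjoint union of the four cosets above, so the left-hand side equals $\omega(W)$. Since $\degalg(F)=k\le (k+1)-1$, \cref{prop:degree<k} applied with $k+1$ in place of $k$ gives $\omega(A)=0$ for every $(k+1)$-flat $A$. Hence $\omega(W)=0$, which is the identity.

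\emph{Degenerate cases.} These are checked directly from how the cosets coincide.
\begin{itemize}
\item If $x\in U$, then $x+U=U$ and $x+y+U=y+U$. Both sides reduce to $\omega(y+U)$.
\item The case $y\in U$ is symmetric.
\item If $x+y\in U$, then $x+U=y+U$ and $x+y+U=U$. The left side is $\omega(U)+2\,\omega(x+U)=\omega(U)$, which equals the right side.
\end{itemize}
This settles the first statement for all $x,y\in\F_2^n$.

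\emph{Second statement.} By the first part, $\omega(U)+\omega(x+U)+\omega(y+U)=\omega(x+y+U)$. Here $x+y+U$ is a $(k-1)$-flat, so $(k-1)$th-order sum-freeness gives $\omega(x+y+U)\neq 0$.

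I expect no real obstacle. The only points needing care are that \cref{prop:degree<k} is invoked at order $k+1$ (valid exactly when a $(k+1)$-space exists, which is the generic case), and that the coincident-coset cases are not overlooked.
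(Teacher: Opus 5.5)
Your proof is correct and follows the paper's route: the paper simply notes that the lemma is immediate from \cref{prop:degree<k}, because a function of algebraic degree at most $k$ sums to zero over the $(k+1)$-space spanned by $U$, $x$, $y$, and the second claim then follows from $(k-1)$th-order sum-freeness applied to the flat $x+y+U$. Your separate treatment of the degenerate cases $x\in U$, $y\in U$, $x+y\in U$, where this span is not a $(k+1)$-space, is a detail the paper leaves implicit, and you handle it correctly.
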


We now set some notation. Identify the vector space $\F_2^{n+1}$ with $\F_2^n\times \F_2$. Let $\mathcal{H} = \F_2^n \times \{ 0\} \subseteq \F_2^{n+1}$ be a hyperplane. Now, we can understand a vector $\overline{v}$ of $\F_2^{n+1}$ as a pair $(v,v_1)$, where $v\in \F_2^n$ and $v_1\in \F_2$.

\begin{notation}
    Let $F$ be an $(n,m)$-function and let $U$ be a flat of $\F_2^n\times \F_2$. We write 
    \begin{align*}
        \Omega_F(U)=\sum\limits_{(v,v_1)\in U}^{}   F(v).
    \end{align*}
    If the function $F$ is clear from the context, we simply write $\Omega(U)$.
\end{notation}

\begin{remark}
    Note that this is subtly different from the witness notation introduced in \ref{Notation: Witness}.
\end{remark}

For any $k$-space $U$ of $\F_2^{n+1}$, we either have $U\subseteq \HH$, or $U\cap \HH$ is a $(k-1)$-space. 
If $U\subseteq \HH$, then $U$ is a $k$-space of $\HH$ and we can understand $\Omega(U)$ as $\omega(U)$, where $\Omega(U)$ is on $\F_2^{n+1}$ and $\omega(U)$ is on $\F_2^n$.

Now, we define the coloring function $c_F$.

\begin{defn} \label{D: coloring function}
     Let $F$ be an $(n,m)$-function that is $\{k-1,k\}$-order sum-free and of algebraic degree $k$ for $k\geq 2$. We define the function $c_F$ from the set of $k$-spaces of $\F_{2}^{n+1}$ to $\F_2^m$ as follows
 \begin{align*}
c_F(U) &=
\begin{cases}
\Omega(U) & \text{if } U\subseteq \HH, \\
\Omega(U\setminus \HH) & \text{if } U\not\subseteq \HH.
\end{cases}
\end{align*}
\end{defn}

\begin{remark}
Note that for $k = 2$, this definition coincides with the coloring function given in~\cite{heering2025lineparallelismspgn2preparatalike}.
\end{remark}

\begin{prop} \label{P: constructive Prove of Them 5.1} 
   Let $F$ be an $(n,m)$-function that is $\{k-1,k\}$-order sum-free and of algebraic degree $k$ for $2\leq k<n$. Then $c_F$ induces a valid coloring on $J_2(n+1,k)$.
\end{prop}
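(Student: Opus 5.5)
The plan is to check two things: every color $c_F(U)$ is non-zero, so only the $\Gauss{m}{1}$ non-zero elements of $\F_2^m$ are used, and any two $k$-spaces $U\neq W$ of $\F_2^{n+1}$ with $\dim(U\cap W)=k-1$ receive different colors. First I will describe $c_F$ concretely.

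If $U\subseteq\HH$, then $c_F(U)=\omega(U)$, where $U$ is viewed as a $k$-space of $\F_2^n$. If $U\not\subseteq\HH$, then $U\cap\HH=V\times\{0\}$ for a $(k-1)$-space $V$ of $\F_2^n$. In that case $U\setminus\HH=(V+a)\times\{1\}$ for some $a\in\F_2^n$, so $c_F(U)=\omega(V+a)$. Non-vanishing then follows in the first case from $k$th-order sum-freeness and in the second from $(k-1)$th-order sum-freeness.

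For adjacency, let $U,W$ be $k$-spaces with $\dim(U\cap W)=k-1$. I split into three cases according to how many of $U,W$ lie in $\HH$.

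\emph{Case 1: $U,W\subseteq\HH$.} The claim is exactly \cref{L: kth-order coloring}.

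\emph{Case 2: $U\subseteq\HH$, $W\not\subseteq\HH$.} Write $W\cap\HH=V'\times\{0\}$ and $W\setminus\HH=(V'+b)\times\{1\}$. Since $U\cap W\subseteq W\cap\HH$ and both have dimension $k-1$, we get $U\cap W=V'\times\{0\}$. Hence $U=V'\cup(V'+u)$ for some $u$, and $c_F(U)=\omega(V')+\omega(V'+u)$, while $c_F(W)=\omega(V'+b)$. Equal colors would give $\omega(V')+\omega(u+V')+\omega(b+V')=0$. This contradicts the second part of \cref{L: special condition}, which needs $\degalg F=k$ and $(k-1)$th-order sum-freeness.

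\emph{Case 3: $U,W\not\subseteq\HH$.} Write $U\cap\HH=V\times\{0\}$, $U\setminus\HH=(V+a)\times\{1\}$, and similarly $V'$, $b$ for $W$.
\begin{itemize}
\item If $V=V'$, then $U\neq W$ forces $V+a\neq V+b$. The two distinct cosets of the $(k-1)$-space $V$ then have distinct witnesses by \cref{prop:k-1_flats}, applied with $k$.
\item If $V\neq V'$, then $U\cap W\not\subseteq\HH$, since otherwise $U\cap W=V=V'$. So $U\cap W$ contains some $(c,1)$, and $(U\cap W)\cap\HH=(V\cap V')\times\{0\}$ is a $(k-2)$-space $S\times\{0\}$. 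Then the $(k-1)$-flats $V+a$ and $V'+b$ are distinct and meet in exactly the $(k-2)$-flat $S+c$. By \cref{L: kth-order coloring} applied to order $k-1$ (note $k-1\geq1$ and $F$ is $(k-1)$th-order sum-free), $\omega(V+a)\neq\omega(V'+b)$.
\end{itemize}

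The main obstacle is the bookkeeping in Case 3 when $V\neq V'$. One has to identify the intersection of the two "upper halves" correctly as a $(k-2)$-flat, so that the $(k-1)$th-order version of \cref{L: kth-order coloring} applies. Case 2 is the only place where the hypothesis $\degalg F=k$ is used, through \cref{L: special condition}.
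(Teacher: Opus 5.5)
Your proof is correct and follows essentially the same argument as the paper. The paper uses four cases: your two subcases of Case 3 ($V=V'$ versus $V\neq V'$) are exactly its Cases 3 and 4 ($U\cap W\subseteq\HH$ versus $U\cap W\not\subseteq\HH$), Case 2 goes through \cref{L: special condition} as yours does, and the only cosmetic difference is that in the last case the paper writes the colour difference directly as the witness of a $(k-1)$-flat, which you package as \cref{L: kth-order coloring} applied at order $k-1$.
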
 

\begin{proof}
Let $U$ be a $k$-space of $\F_2^{n+1}$. As $F$ is $\{k-1,k\}$-order sum-free, we always have  $c_F(U)\ne0$.
   It remains to be proven that $k$-spaces that intersect in dimension $k-1$ have different colors, i.e. $c_F$ assigns them different values.
There are four possible cases that have to be checked; we treat them all separately. In all cases, let $U_1,U_2$ be $k$-spaces of $\F_2^{n+1}$ with $\dim(U_1\cap U_2)=k-1$.

    \noindent \underline{Case 1:} Assume that $U_1,U_2\subseteq \HH$. Then 
    $$
    c_F(U_1) + c_F(U_2) = \Omega(U_1) + \Omega(U_2) =\omega(U_1) + \omega(U_2).
    $$
 \cref{L: kth-order coloring} implies that the quantity above is non-zero.

    \noindent \underline{Case 2:}  Assume that $U_1\subseteq \HH$  and  $U_2\not \subseteq \HH$. Then $U_1 \cap U_2 = W \subseteq \mathcal{H}$. 
    Then, there exist  $a,b\in \F_2^n$, such that 
    \begin{align*}
         U_1 &= W \cup (W + (a, 0)), \\
         U_2 &= W \cup (W+(b, 1)).
     \end{align*}
    By definition, we have 
    \begin{align*}
         c_F(U_1) + c_F(U_2)  &=\Omega(U_1) + \Omega(U_2\setminus \HH) \\
      &= \omega(W) + \Omega(W +(a, 0)) + \Omega(W +(b,1))\\
      &= \omega(W) + \omega(W +a) + \omega(W +b).
     \end{align*}
By Lemma \ref{L: special condition}, the quantity above is non-zero, and hence $c_F(U_1) \neq c_F(U_2)$.

     \noindent \underline{Case 3:} Assume that $U_1\cap U_2\subseteq \HH$ and $U_1,U_2\not\subseteq \HH$. Then we have
     $$
     c_F(U_1) +c_F(U_2) = \Omega(U_1\setminus \HH) + \Omega(U_2 \setminus \HH).
     $$
     Since $\dim(U_1 \cap U_2) = k-1$, then $(U_1 \setminus \HH) \cup (U_2\setminus \HH)$ is a $k$-flat, the projection on $\HH$ is also a $k$-flat and hence the sum above cannot be zero, as $F$ is $k$th-order sum-free. 

     \noindent \underline{Case 4:} Assume that $U_1,U_2\not\subseteq \HH$ and $U_1\cap U_2\not\subseteq \HH$. Let $W:= U_1 \cap U_2 \cap \HH$. Then, there exist  $a, b, c \in \F_2^n$ with distinct cosets modulo $W$, such that
     \begin{align*}
         U_1 &= W \cup (W + (a, 1)) \cup (W+ (b, 1)) \cup (W + (a+b, 0)), \\
         U_2 &= W \cup (W+(a, 1)) \cup (W + (c, 1)) \cup (W+ (a+c, 0)).
     \end{align*}
     Hence, 
     \begin{align*}
     c_F(U_1) + c_F(U_2)& = \Omega(W + (a, 1)) + \Omega(W + (b, 1)) + \Omega(W + (a, 1)) + \Omega(W + (c, 1)) \\
     &= \Omega(W + (b, 1)) + \Omega(W + (c, 1)).
     \end{align*}
     Since $(W + b)\cup (W + c)$ is a $(k-1)$-flat of $\HH$ and $F$ is $(k-1)$th-order sum-free, this quantity is non-zero.
 \end{proof}

The previous proposition directly implies \cref{T: k and k-1-order sum-free gives coloring}.

\section{On multiorder sum-free functions}
\label{sec: multiorder sum-free functions}

We showed in \cref{T: k and k-1-order sum-free gives coloring} that the existence of a $\set{k-1,k}$-order sum-free function of algebraic degree $k$ implies the existence of a coloring of $J_2(n+1,k)$.
For $k=2$, there are many known examples \cite{heering2025lineparallelismspgn2preparatalike} and in this case the coloring is optimal, see Section \ref{subsec:graphs}.
The inverse function $F_{inv}(x) = x^{2^n-2}$ in odd dimension $n$ is $ \set{1, 2,n-2,n-1}$-order sum-free, see \cite{carlet2025inverse}. 
So for $k=n-1$ this function satisfies the conditions of \cref{T: k and k-1-order sum-free gives coloring} and thus provides a coloring of $J_2(n+1,n-1)$. However, as $J_2(n+1,n-1)$ is dual to $J_2(n+1,2)$, this does not give new information regarding the chromatic number.

In the rest of this section, we summarize the results of our search for more such functions in small dimensions for $k\geq 3$. We start with a result that implies the existence of $\set{2,3}$-order sum-free functions for $n=5$. We show that for odd $n$, the family of $k$th-order sum-free functions from \cref{prop:Carlet_kth-order} always contains a multiorder sum-free function. We remark that the authors of \cite{blondeau2011} showed that $F_{m+1,1}$ is APN using the same argument.

\begin{prop}
\label{th:multiorder_carlet}
	Let $n$ be odd with $n=2m+1$. Let $i \in \set{1,\dots,2m}$ such that $\gcd(2i,n) = 1$. Then, the $(n,n)$-function $F_{m+1,2i}$ defined by
	\[
		F_{m+1,2i}(x) = x^\frac{2^{2i(m+1)}-1}{2^{2i}-1} = x^{1+2^{2i}+\dots+2^{2im}},
	\]
	where $2i$ is calculated modulo $n$, is the inverse of the Gold APN function $x^{2^i+1}$, and $F_{m+1,2i}$ is $\set{2,m+1}$-order sum-free.
\end{prop}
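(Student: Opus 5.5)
The plan is to verify three things in turn: that the exponent defining $F_{m+1,2i}$ inverts $2^i+1$ modulo $2^n-1$, that $F_{m+1,2i}$ is therefore APN, and that $F_{m+1,2i}$ is $(m+1)$th-order sum-free by \cref{prop:Carlet_kth-order}.

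\emph{Step 1 (the exponent is well defined).} Write $e=\sum_{l=0}^{m}2^{2il}$. Since $x^{2^n}=x$ on $\Fpn$, the power map $x\mapsto x^{e}$ depends only on $e$ modulo $2^n-1$. Each $2^{2il}$ depends modulo $2^n-1$ only on $2il \bmod n$. Hence replacing $2i$ by its residue $j\equiv 2i \pmod n$ does not change the function, which justifies the phrase ``$2i$ is calculated modulo $n$''. Moreover $\gcd(j,n)=\gcd(2i,n)=1$.

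\emph{Step 2 (inverse of the Gold function).} I compute directly
\[
e\,(2^i+1)=\sum_{l=0}^{m}\bigl(2^{2il}+2^{2il+i}\bigr)=\sum_{t=0}^{2m+1}2^{it}=\sum_{t=0}^{n}2^{it}.
\]
Since $\gcd(i,n)=1$ (it divides $\gcd(2i,n)=1$), the residues $it \bmod n$ for $t=0,\dots,n-1$ run over all of $\{0,\dots,n-1\}$. So $\sum_{t=0}^{n-1}2^{it}\equiv\sum_{s=0}^{n-1}2^{s}=2^n-1\equiv 0 \pmod{2^n-1}$. The remaining term is $2^{in}\equiv 1$, so $e(2^i+1)\equiv 1\pmod{2^n-1}$.

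For $n$ odd and $\gcd(i,n)=1$, the Gold map $x^{2^i+1}$ is a permutation, because $\gcd(2^i+1,2^n-1)=1$. Alternatively, the congruence just shown already exhibits $2^i+1$ as invertible modulo $2^n-1$. Hence $F_{m+1,2i}$ is its compositional inverse.

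\emph{Step 3 (second-order sum-freeness).} The Gold function $x^{2^i+1}$ is APN since $\gcd(i,n)=1$, a classical fact. The inverse of an APN permutation is CCZ-equivalent to it, since the graphs differ by swapping coordinates. APN-ness is CCZ-invariant, so $F_{m+1,2i}$ is APN. By the characterization recalled in \cref{subsec:sum-free_functions}, this means $F_{m+1,2i}$ is second-order sum-free.

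\emph{Step 4 ($(m+1)$th-order sum-freeness).} By Step 1, $F_{m+1,2i}=F_{m+1,j}$ with $\gcd(j,n)=1$. Since $m+1\le n$, \cref{prop:Carlet_kth-order} applies directly and shows $F_{m+1,2i}$ is $(m+1)$th-order sum-free. Together with Step 3, this shows $2,m+1\in K_F$.

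No step is a real obstacle. The only points needing care are the reduction of $2i$ modulo $n$ in Step 1, which \cref{prop:Carlet_kth-order} requires, and the cyclic reindexing in Step 2, which uses $\gcd(i,n)=1$.
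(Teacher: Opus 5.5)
Your proof is correct and follows essentially the same route as the paper: \cref{prop:Carlet_kth-order} gives $(m+1)$th-order sum-freeness, multiplying the exponents gives $\sum_{t=0}^{n}2^{it}\equiv 1 \pmod{2^n-1}$, and CCZ-invariance of APN-ness transfers second-order sum-freeness to the inverse. The differences are cosmetic. You show $\sum_{t=0}^{n-1}2^{it}\equiv 0$ by noting that $t\mapsto it \bmod n$ permutes the residues, whereas the paper uses $\gcd(2^i-1,2^n-1)=1$ to get $(2^{in}-1)/(2^i-1)\equiv 0 \pmod{2^n-1}$. You also make explicit the reduction of $2i$ modulo $n$, which the paper leaves implicit.
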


\begin{proof}
	According to \cref{prop:Carlet_kth-order}, $F_{m+1,2i}$ is $(m+1)$th-order sum-free. Since $n$ being odd and $\gcd(2i,n) = 1$ imply $\gcd(i,n) = 1$, the Gold function $x^{2^i+1}$ is APN. We prove that $F_{m+1,2i}$ is the inverse of $x^{2^i+1}$ by showing that their exponents multiply to $1$. Using $2i(m+1) = i(n+1)$, we multiply the exponents $2^i+1$ and $\frac{2^{i(n+1)}-1}{2^{2i}-1}$ and obtain $\frac{2^{i(n+1)}-1}{2^i-1}$. This expression equals $1 + 2^i +2^{2i} + \dots + 2^{(n-1)i} + 2^{ni}$ and, consequently, is congruent to $1 + \frac{2^{in}-1}{2^i-1} \pmod{2^n-1}.$ Since $2^{in}-1$ is divisible by $2^i-1$ and by $2^n-1$ and $\gcd(2^i-1,2^n-1)=2^{\gcd(i,n)}-1=1$, we have $\frac{2^{in}-1}{2^i-1} \equiv 0 \pmod{2^n-1}$. Taking the inverse is a CCZ-transformation, so $F_{m+1,2i}$ is APN or, equivalently, second-order sum-free.
\end{proof}

Therefore, the functions $x^7$ and $x^{21}$ are cubic and $\set{2,3}$-order sum-free in $\F_{2^5}$, and \cref{T: k and k-1-order sum-free gives coloring} yields the following.

\begin{cor}
    $ \chi(J_2(6,3))\leq 31$.
\end{cor}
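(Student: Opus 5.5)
The corollary follows by applying \cref{T: k and k-1-order sum-free gives coloring} with $n=5$, $k=3$, $m=5$ to a suitable function on $\F_{2^5}$. The plan is to take $F(x)=x^{21}$ (or $x^7$) on $\F_{2^5}$ and verify the three hypotheses of that theorem: $2\le k<n$, algebraic degree exactly $3$, and $\set{2,3}$-order sum-freeness. The conclusion is then a coloring of $J_2(6,3)$ with $\Gauss{5}{1}=2^5-1=31$ colors, which is the bound stated.

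For sum-freeness I would invoke \cref{th:multiorder_carlet} with $n=5=2m+1$, so $m=2$ and $m+1=3$. I would choose $i\in\set{1,\dots,4}$ with $\gcd(2i,5)=1$; every such $i$ works.
\begin{itemize}
\item With $i=1$, the exponent is $1+2^2+2^4=21$, so $F_{3,2}(x)=x^{21}$. This is the inverse of the Gold function $x^3$ and is $\set{2,3}$-order sum-free.
\item With $i=3$, the value $2i=6\equiv 1 \pmod 5$ gives exponent $1+2+4=7$, so $x^7=F_{3,1}$. This is the inverse of $x^{9}$, again a Gold APN exponent.
\end{itemize}
In either case, $3$rd-order sum-freeness also follows directly from \cref{prop:Carlet_kth-order}, since $\gcd(1,5)=\gcd(2,5)=1$.

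For the algebraic degree: $21=10101_2$ and $7=111_2$ both have $2$-weight $3$. Since $F$ is a monomial with nonzero coefficient, $\degalg(F)=3=k$. The condition $2\le 3<5$ clearly holds. Applying \cref{T: k and k-1-order sum-free gives coloring}, through the explicit coloring $c_F$ of \cref{D: coloring function} and \cref{P: constructive Prove of Them 5.1}, gives a proper coloring of $J_2(6,3)$ with $31$ colors, so $\chi(J_2(6,3))\le 31$.

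The only point needing any care is the index bookkeeping in \cref{th:multiorder_carlet}: $2i$ is taken modulo $n$, and the resulting exponent must genuinely be $7$ or $21$. This is a direct computation, so no real obstacle is expected.
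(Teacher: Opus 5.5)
Your proposal is correct and follows the paper's argument exactly: the paper uses \cref{th:multiorder_carlet} to note that $x^7$ and $x^{21}$ are cubic and $\set{2,3}$-order sum-free on $\F_{2^5}$, then applies \cref{T: k and k-1-order sum-free gives coloring} with $n=5$, $k=3$, $m=5$ to get $31$ colors. Your index bookkeeping ($i=1$ giving $x^{21}$, $i=3$ giving $x^7$) and your algebraic-degree check are both accurate.
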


Finally, we present the results of our computational search for more $\set{k-1,k}$-order sum-free functions in $\Fpnv$ for $n=5,6,7$. We checked if any of the known APN functions of algebraic degree at least $3$ are $k$th-order sum-free for $k \ge 3$. \par

For $n=5$, there are seven APN functions up to EA-equivalence, see \cite{brinkmannleander2008}. 
Except for the functions $x^7$ and $x^{21}$ and the inverse function $x^{30}$ mentioned above, none is third-order sum-free.\par

For $n=6$, all known APN functions are classified up to EA-equivalence \cite{calderini2020,langevin2012}. Recently, the authors of \cite{baudrin2025} conjectured that this list is complete. They list 716 EA-classes of APN functions, 521 of these are cubic and 182 of these have algebraic degree $4$. Our computations show that none of these functions is third- or fourth-order sum-free.\par

For $n=7$, the authors of \cite{baudrin2025} recently classified all APN functions that are CCZ-equivalent to one of the completely classified 488 CCZ-inequivalent quadratic APN functions~\cite{calderini2020} up to EA-equivalence. There are 60,302 cubic EA-classes and 21,619 EA-classes of algebraic degree $4$ in these 488 CCZ-classes. We confirmed computationally that none of these APN functions is third-order sum-free and that only the three functions $x^{15}, x^{77}$ and $x^{85}$ from \cref{th:multiorder_carlet}, that are equivalent to Gold APN functions, are fourth-order sum-free.

\section{Open problems}

All known $k$th-order sum-free functions are $(n,n)$-functions for $k\geq 1$. In Section \ref{sec:sum-free_functions} we show that for certain values of $m$ there are no $k$th-order sum-free functions which are $(n,m)$-functions. Nevertheless, the following is still open.

\begin{problem}
    Let $0<m, k < n$. Do there exist $(n, m)$-functions with $m < n$ that are $k$th-order sum-free?
\end{problem}

The existence, or non-existence of multiorder sum-free functions seems to be a nontrivial problem. Via \cref{T: k and k-1-order sum-free gives coloring}, a positive result has implications for the chromatic number of the Grassmann graph. The smallest open case is the following.

\begin{problem}
    Do there exist functions in $\F_{2^n}$ that are APN and third-order sum-free for $n>5$? If yes, is this true for infinitely many values of $n$?
\end{problem}

As we have seen in \cref{sec:reed-muller}, we may use $k$th-order sum-free functions to construct large subcodes of RM$(n-k, n)$. 
The Preparata codes of length $2^{n+1}$ can be constructed using crooked functions over $\F_{2^n}$ and all known crooked functions are quadratic APN permutations, which satisfy the conditions of  \cref{T: k and k-1-order sum-free gives coloring} with $k=2$. Hence, it is natural to ask the following.
\begin{problem}\label{prob: 2}
    Can $\{k-1, k\}$-order sum-free functions of degree $k$ be used to build new large non-linear codes?
\end{problem}
To motivate this a little more, note that when a Kerdock code of length $2^{n+1}$ exists, then $n+1$ is even, and hence the inverse function over $\F_{2^{n}}$ is $\{n-1, n-2\}$-order sum-free and of algebraic degree $(n-1)$. Similarly, there is one (seemingly sporadic) example of a non-linear binary code of length 64, minimum distance 12 and size $2^{37}$ constructed by Calderbank and McGuire \cite{1997CalderbankCode}. In this case, we also have a cubic $\{ 2, 3\}$-order sum-free function over $\F_{2^5}$.

At the end of \cref{sec:reed-muller}, we remarked that the equivalence of the codes associated with $k$th-order sum-free functions may be interesting to study. Therefore, we pose the following.
\begin{problem}
    What is the relation between code equivalence of the subcodes $\mathcal{C}_F$ and the corresponding sum-free $(n,m)$-functions $F$?
\end{problem}



\subsection*{AI Tool Disclosure}

ChatGPT Pro was used in the discovery and proof of Lemma \ref{L: kth-order-kth-degree}. 

\subsection*{Acknowledgement}

This paper was in part conceptualized during the conference Finite Geometries 2025 in Irsee, the authors would like to thank the organizers for a wonderful conference. 
The authors would also like to thank Alexander Pott and Leo Storme for many helpful discussions.
The first author would like to acknowledge the support of the Universiteit Gent, where part of this work was done while the first author was a visitor. 
The second author is funded by the Deutsche Forschungsgemeinschaft (DFG, German Research Foundation) – 541511634.

\bibliographystyle{plainurl}
\bibliography{bibliography-short-journals.bib}

@article{Meszka2013,
    AUTHOR = {Meszka, M.},
     TITLE = {The chromatic index of projective triple systems},
   JOURNAL = {J. Combin. Des.},
  FJOURNAL = {Journal of Combinatorial Designs},
    VOLUME = {21},
      YEAR = {2013},
    NUMBER = {11},
     PAGES = {531--540},
      ISSN = {1063-8539,1520-6610},
   MRCLASS = {05B07 (05C15)},
  MRNUMBER = {3103360},
MRREVIEWER = {Landang\ Yuan},
       DOI = {10.1002/jcd.21368}
}

@misc{dhaeseleer2026chromaticnumbergrassmanngraphs,
      title={Chromatic Number of {G}rassmann Graphs and {MRD} codes}, 
      author={J. D'haeseleer and F. Pavese and P. Santonastaso and V. Taranchuk},
      year={2026},
      eprint={2602.10777},
      archivePrefix={arXiv},
      primaryClass={math.CO}
}

@misc{subcodes_recursive,
      title={Subcodes of Second-Order {R}eed-{M}uller Codes via Recursive Subproducts}, 
      author={A P Vaideeswaran and Madireddi Sai Harish and Lakshmi Prasad Natarajan},
      year={2025},
      eprint={2501.10700},
      archivePrefix={arXiv},
      primaryClass={cs.IT}
}

@article{salomon,
author = {Salomon, A. and Amrani, O.},
year = {2005},
month = {12},
pages = {3918 - 3930},
title = {Augmented Product Codes and Lattices: {R}eed-{M}uller Codes and {B}arnes-{W}all Lattices},
volume = {51},
journal = {IEEE Trans. Inform. Theory},
doi = {10.1109/TIT.2005.856937}
}

@article {1997CalderbankCode,
    AUTHOR = {Calderbank, A. R. and McGuire, G. M.},
     TITLE = {Construction of a {$(64,2^{37},12)$} code via {G}alois rings},
   JOURNAL = {Des. Codes Cryptogr.},
  FJOURNAL = {Designs, Codes and Cryptography. An International Journal},
    VOLUME = {10},
      YEAR = {1997},
    NUMBER = {2},
     PAGES = {157--165},
      ISSN = {0925-1022,1573-7586},
   MRCLASS = {94B40 (11T71)},
  MRNUMBER = {1432295},
       DOI = {10.1023/A:1008240319733},
       URL = {https://doi.org/10.1023/A:1008240319733},
}

@book{MacWilliams1977TheTO,
    author    = {MacWilliams, F. J. and Sloane, N. J. A.},
  title     = {The Theory of Error-Correcting Codes},
  series    = {North-Holland Mathematical Library},
  volume    = {16},
  publisher = {North-Holland},
  address   = {Amsterdam},
  year      = {1977},
  pages     = {xii+762},
  isbn      = {978-0-444-85009-6},
  url = {https://www.sciencedirect.com/bookseries/north-holland-mathematical-library/vol/16/suppl/C}
}

@inproceedings{Jamali2022LowComplexityDO,
  title={Low-Complexity Decoding of a Class of {R}eed-{M}uller Subcodes for Low-Capacity Channels},
  author={M. V. Jamali and M. Fereydounian and H. Mahdavifar and H. Hassani},
  booktitle={ICC 2022 - IEEE International Conference on Communications},
  year={2022},
  pages={123-128},
  url={https://api.semanticscholar.org/CorpusID:246652576}
}

@ARTICLE{Construction_of_RM_subcodes,
  author={Van Wonterghem, J. and Boutros, J. J. and Moeneclaey, M.},
  journal={IEEE Commun. Lett.}, 
  title={On Constructions of {R}eed-{M}uller Subcodes}, 
  year={2018},
  volume={22},
  number={2},
  pages={220-223},
  doi={10.1109/LCOMM.2017.2772247}}

@misc{heering2025lineparallelismspgn2preparatalike,
      title={Line-parallelisms of {PG}$(n, 2)$ from {P}reparata-like codes}, 
      author={P. Heering and V. Taranchuk},
      year={2025},
      eprint={2508.19901},
      archivePrefix={arXiv},
      primaryClass={math.CO}
}

@misc{dhaeseleer2025chromaticnumbergrassmanngraphs,
      title={On the Chromatic Number of {G}rassmann Graphs}, 
      author={J. D'haeseleer and V. Taranchuk},
      year={2025},
      eprint={2505.22055},
      archivePrefix={arXiv},
      primaryClass={math.CO},
}

@book{algebraic_graph_theory,
  author     = {Godsil, C. and Royle, G.},
  title      = {{Algebraic Graph Theory}},
  volume     = {207},
  year       = {2001},
  pages      = {},
  isbn       = {978-0-387-95220-8},
  doi        = {10.1007/978-1-4613-0163-9}
}

@inproceedings{parallelisms_1973,
  author    = {Zaitsev, G. V. and Zinoviev, V. A. and Semakov, N. V.},
  title     = {{Interrelation of Preparata and Hamming codes and extension of Hamming codes to new double-error-correcting codes}},
  booktitle = {Proceedings of the 2nd International Symposium on Information Theory},
  year      = {1973},
  pages     = {257--263}
}

@inproceedings{eurocrypt-1993-2628,
  author     = {Nyberg, K.},
  title      = {Differentially Uniform Mappings for Cryptography},
  booktitle  = {Advances in Cryptology - EUROCRYPT '93, Workshop on the Theory and Application of of Cryptographic Techniques, Lofthus, Norway, May 23-27, 1993, Proceedings},
  series     = {Lecture Notes in Computer Science},
  volume     = {765},
  pages      = {55--64},
  year       = {1993},
  doi        = {10.1007/3-540-48285-7_6}
}

@article{pott_survey,
  author      = {Pott, A.},
  title       = {{Almost perfect and planar functions}},
  journal     = {Des. Codes Cryptogr.},
  year        = {2016},
  issue_date  = {January   2016},
  publisher   = {Kluwer Academic Publishers},
  address     = {USA},
  volume      = {78},
  number      = {1},
  issn        = {0925-1022},
  pages       = {141--195},
  numpages    = {55},
  doi         = {10.1007/s10623-015-0151-x},
  url         = {https://doi.org/10.1007/s10623-015-0151-x},
}

@ARTICLE{BVW,
  author   = {Baker, R. D. and van Lint, J. and Wilson, R.},
  title    = {{On the Preparata and Goethals codes}},
  journal  = {IEEE Trans. Inform. Theory},
  volume   = {29},
  number   = {3},
  year     = {1983},
  pages    = {342--345},
  doi      = {10.1109/TIT.1983.1056675}
}

@article {brinkmannleander2008,
	AUTHOR = {Brinkmann, M. and Leander, G.},
	TITLE = {On the classification of {APN} functions up to dimension five},
	JOURNAL = {Des. Codes Cryptogr.},
	FJOURNAL = {Designs, Codes and Cryptography. An International Journal},
	VOLUME = {49},
	YEAR = {2008},
	NUMBER = {1-3},
	PAGES = {273--288},
	ISSN = {0925-1022},
	MRCLASS = {94A60 (11T71 94C10)},
	MRNUMBER = {2438456},
	DOI = {10.1007/s10623-008-9194-6},
	URL = {https://doi.org/10.1007/s10623-008-9194-6},
}

@inproceedings{baudrin2025,
	TITLE = {Exploring the Set of {APN} Functions},
	AUTHOR = {Baudrin, J. and Galissant, P. and Perrin, L.},
	URL = {https://hal.science/hal-05376704},
	BOOKTITLE = {{BFA 2025 - 10th International Workshop on Boolean Functions and their Applications}},
	ADDRESS = {Larnaca, Cyprus},
	YEAR = {2025},
	MONTH = Sep,
	HAL_ID = {hal-05376704},
	HAL_VERSION = {v2},
}

@article {calderini2020,
    AUTHOR = {Calderini, M.},
     TITLE = {On the {EA}-classes of known {APN} functions in small
              dimensions},
   JOURNAL = {Cryptogr. Commun.},
  FJOURNAL = {Cryptography and Communications. Discrete Structures, Boolean
              Functions and Sequences},
    VOLUME = {12},
      YEAR = {2020},
    NUMBER = {5},
     PAGES = {821--840},
      ISSN = {1936-2447,1936-2455},
   MRCLASS = {94A60 (06E30 11T71 14G50)},
  MRNUMBER = {4141572},
MRREVIEWER = {Sugata\ Gangopadhyay},
       DOI = {10.1007/s12095-020-00427-1},
       URL = {https://doi.org/10.1007/s12095-020-00427-1},
}

@misc{langevin2012, 
	author={Langevin, P. and Saygi, E. and Saygi, Z.},
	url={https://langevin.univ-tln.fr/project/apn-6/apn-6.html}, 
	title={Classification of {APN} cubics in dimension 6 over {GF(2)}}, 
	year={2012}
}

@book {carlet2021book,
	AUTHOR = {Carlet, C.},
	TITLE = {Boolean functions for cryptography and coding theory},
	PUBLISHER = {Cambridge University Press, New York},
	YEAR = {2020},
	PAGES = {xiv+562},
	ISBN = {978-1-108-47380-4; [9781108606806]},
	MRCLASS = {94A60 (06E30 11T71 94D10)},
	MRNUMBER = {4625791},
	DOI = {10.1017/9781108606806},
}

@article {1970Berlekamp,
    AUTHOR = {Berlekamp, E. R. and Sloane, N. J. A.},
     TITLE = {Restrictions on weight distribution of {R}eed-{M}uller codes},
   JOURNAL = {Inform. and Control},
  FJOURNAL = {Information and Control},
    VOLUME = {14},
      YEAR = {1969},
     PAGES = {442--456},
      ISSN = {0019-9958,1878-2981},
   MRCLASS = {94.10},
  MRNUMBER = {243897},
doi = {https://doi.org/10.1016/S0019-9958(69)90150-8},
}

@article {kasami1970,
    AUTHOR = {Kasami, T. and Tokura, N.},
     TITLE = {On the weight structure of {R}eed-{M}uller codes},
   JOURNAL = {IEEE Trans. Inform. Theory},
  FJOURNAL = {Institute of Electrical and Electronics Engineers.
              Transactions on Information Theory},
    VOLUME = {IT-16},
      YEAR = {1970},
     PAGES = {752--759},
      ISSN = {0018-9448,1557-9654},
   MRCLASS = {94.15},
  MRNUMBER = {277300},
       DOI = {10.1109/tit.1970.1054545},
       URL = {https://doi.org/10.1109/tit.1970.1054545},
}

@article {carlet2025generalizations,
	AUTHOR = {Carlet, C.},
	TITLE = {Two generalizations of almost perfect nonlinearity},
	JOURNAL = {J. Cryptology},
	FJOURNAL = {Journal of Cryptology. The Journal of the International
	Association for Cryptologic Research},
	VOLUME = {38},
	YEAR = {2025},
	NUMBER = {2},
	PAGES = {Paper No. 20, 32},
	ISSN = {0933-2790},
	MRCLASS = {94A60},
	MRNUMBER = {4870858},
	DOI = {10.1007/s00145-025-09538-5}
}

@article {carlet2025inverse,
	AUTHOR = {Carlet, C.},
	TITLE = {On the vector subspaces of {$\mathbb {F}_{2^n}$} over which the
	multiplicative inverse function sums to zero},
	JOURNAL = {Des. Codes Cryptogr.},
	FJOURNAL = {Designs, Codes and Cryptography. An International Journal},
	VOLUME = {93},
	YEAR = {2025},
	NUMBER = {4},
	PAGES = {1237--1254},
	ISSN = {0925-1022},
	MRCLASS = {11T06 (12E05 12E10)},
	MRNUMBER = {4894915},
	DOI = {10.1007/s10623-024-01531-6}
}

@misc{carlet2025tDegree,
	author = {C. Carlet},
	title = {A notion on {S}-boxes for a partial resistance to some integral attacks},
	howpublished = {Cryptology {ePrint} Archive, Paper 2024/1693},
	year = {2025},
	url = {https://eprint.iacr.org/2024/1693}
}

@article {hou2006,
	AUTHOR = {Hou, X-d.},
	TITLE = {Affinity of permutations of {$\mathbb F_2^n$}},
	JOURNAL = {Discrete Appl. Math.},
	FJOURNAL = {Discrete Applied Mathematics. The Journal of Combinatorial
	Algorithms, Informatics and Computational Sciences},
	VOLUME = {154},
	YEAR = {2006},
	NUMBER = {2},
	PAGES = {313--325},
	ISSN = {0166-218X},
	MRCLASS = {06E30 (94B05)},
	MRNUMBER = {2194404},
	MRREVIEWER = {Marcel Wild},
	DOI = {10.1016/j.dam.2005.03.022},
	URL = {https://doi.org/10.1016/j.dam.2005.03.022},
}

@misc{kaspers2026WCC,
	title={Nonvanishing $k$-flats of {B}oolean and vectorial functions}, 
	author={C. Kaspers},
	year={2026},
	eprint={2603.28266},
	archivePrefix={arXiv},
	primaryClass={math.CO}
}

@article {blondeau2011,
	AUTHOR = {Blondeau, C. and Canteaut, A. and Charpin, P.},
	TITLE = {Differential properties of {$x\mapsto x^{2^t-1}$}},
	JOURNAL = {IEEE Trans. Inform. Theory},
	FJOURNAL = {Institute of Electrical and Electronics Engineers.
	Transactions on Information Theory},
	VOLUME = {57},
	YEAR = {2011},
	NUMBER = {12},
	PAGES = {8127--8137},
	ISSN = {0018-9448},
	MRCLASS = {11L05 (94A60)},
	MRNUMBER = {2895385},
	DOI = {10.1109/TIT.2011.2169129},
	URL = {https://doi.org/10.1109/TIT.2011.2169129},
}

@article {ebeling2024,
	AUTHOR = {Ebeling, A. and Hou, X-d. and Rydell, A. and
	Zhao, S.},
	TITLE = {On sum-free functions},
	JOURNAL = {Finite Fields Appl.},
	FJOURNAL = {Finite Fields and their Applications},
	VOLUME = {110},
	YEAR = {2026},
	PAGES = {Paper No. 102744, 32},
	ISSN = {1071-5797,1090-2465},
	MRCLASS = {11G25 (11T06 11T71 94D10)},
	MRNUMBER = {4975424},
	DOI = {10.1016/j.ffa.2025.102744},
	URL = {https://doi.org/10.1016/j.ffa.2025.102744},
}

@article {houzhao2025a,
	AUTHOR = {Hou, X-d. and Zhao, S.},
	TITLE = {Two absolutely irreducible polynomials over {$\Bbb{F}_2$} and
	their applications to a conjecture by {C}arlet},
	JOURNAL = {Finite Fields Appl.},
	FJOURNAL = {Finite Fields and their Applications},
	VOLUME = {109},
	YEAR = {2026},
	PAGES = {Paper No. 102713, 17},
	ISSN = {1071-5797,1090-2465},
	MRCLASS = {11G25 (11T06 11T71 94D10)},
	MRNUMBER = {4942688},
	MRREVIEWER = {N.\ L.\ Manev},
	DOI = {10.1016/j.ffa.2025.102713},
	URL = {https://doi.org/10.1016/j.ffa.2025.102713},
}

@article {houzhao2025b,
	AUTHOR = {Hou, X-d. and Zhao, S.},
	TITLE = {On a conjecture about the sum-freedom of the binary
	multiplicative inverse function},
	JOURNAL = {Des. Codes Cryptogr.},
	FJOURNAL = {Designs, Codes and Cryptography. An International Journal},
	VOLUME = {94},
	YEAR = {2026},
	NUMBER = {4},
	PAGES = {Paper No. 86},
	ISSN = {0925-1022,1573-7586},
	MRCLASS = {11G20 (11T06 11T71 94D10)},
	MRNUMBER = {5058034},
	DOI = {10.1007/s10623-026-01819-9},
	URL = {https://doi.org/10.1007/s10623-026-01819-9},
}

@article{carletcharpinzinoviev1998,
	title = {Codes, bent functions and permutations suitable for {DES-like} cryptosystems},
	volume = {15},
	issn = {0925-1022},
	doi = {10.1023/A:1008344232130},
	number = {2},
	urldate = {2012-11-24},
	journal={Des. Codes Cryptogr.},
	author = {Carlet, C. and Charpin, P. and Zinoviev, V.},
	year = {1998},
	pages = {125{\textendash}156},
}

@ARTICLE{muller1954,
  author={Muller, D. E.},
  journal={Trans. I.R.E. Prof. Group Electron. Comput.}, 
  title={{Application of Boolean algebra to switching circuit design and to error detection}}, 
  year={1954},
  volume={EC-3},
  number={3},
  pages={6-12},
  doi={10.1109/IREPGELC.1954.6499441}}

@article {reed1954,
    AUTHOR = {Reed, I. S.},
     TITLE = {A class of multiple-error-correcting codes and the decoding scheme},
   JOURNAL = {Trans. IRE Prof. Group Inf. Theory},
  FJOURNAL = {Trans. IRE},
      YEAR = {1954},
    NUMBER = {PGIT-, PGIT-4},
     PAGES = {38--49},
   MRCLASS = {94.0X},
  MRNUMBER = {89789},
MRREVIEWER = {R.\ W.\ Hamming},
    DOI = {10.1109/TIT.1954.1057465}
}

@article{KoetterKschischang2008,
  author  = {Koetter, R. and Kschischang, F. R.},
  title   = {Coding for Errors and Erasures in Random Network Coding},
  journal = {IEEE Trans. Inform. Theory},
  volume  = {54},
  number  = {8},
  pages   = {3579--3591},
  year    = {2008},
  doi     = {10.1109/TIT.2008.926449}
}

@article{EtzionSilberstein2009Ferrers,
  author  = {Etzion, T. and Silberstein, N.},
  title   = {Error-Correcting Codes in Projective Spaces via Rank-Metric Codes and {Ferrers} Diagrams},
  journal = {IEEE Trans. Inform. Theory},
  volume  = {55},
  number  = {7},
  pages   = {2909--2919},
  year    = {2009},
  doi     = {10.1109/TIT.2009.2021376}
}

@article{SilbersteinEtzion2011Lexicodes,
  author  = {Silberstein, N. and Etzion, T.},
  title   = {Large Constant Dimension Codes and Lexicodes},
  journal = {Adv. Math. Commun.},
  volume  = {5},
  number  = {2},
  pages   = {177--189},
  year    = {2011},
  doi     = {10.3934/amc.2011.5.177}
}

@article{Etzion2015PartialKParallelisms,
  author  = {Etzion, T.},
  title   = {Partial {$k$}-Parallelisms in Finite Projective Spaces},
  journal = {J. Combin. Des.},
  volume  = {23},
  number  = {3},
  pages   = {101--114},
  year    = {2015},
  doi     = {10.1002/jcd.21392}
}

@incollection {edel2009,
	AUTHOR = {Edel, Y. and Pott, A.},
	TITLE = {On the equivalence of nonlinear functions},
	BOOKTITLE = {Enhancing Cryptographic Primitives with Techniques from Error
	Correcting Codes},
	SERIES = {NATO Sci. Peace Secur. Ser. D Inf. Commun. Secur.},
	VOLUME = {23},
	PAGES = {87--103},
	PUBLISHER = {IOS},
	ADDRESS = {Amsterdam},
	YEAR = {2009},
	MRCLASS = {05B10 (11T71 94A60)},
	MRNUMBER = {2762230},
	MRREVIEWER = {Qi Wang},
    doi = {10.3233/978-1-60750-002-5-87}
}

@incollection {langevinleander2008,
	AUTHOR = {Langevin, P. and Leander, G.},
	TITLE = {Classification of {B}oolean quartic forms in eight variables},
	BOOKTITLE = {Boolean functions in cryptology and information security},
	SERIES = {NATO Sci. Peace Secur. Ser. D Inf. Commun. Secur.},
	VOLUME = {18},
	PAGES = {139--147},
	PUBLISHER = {IOS, Amsterdam},
	YEAR = {2008},
	ISBN = {978-1-58603-878-6},
	MRCLASS = {94C10},
	MRNUMBER = {2581571},
    doi = {10.3233/978-1-58603-878-6-139}
}

@article {hou1996,
	AUTHOR = {Hou, X-d.},
	TITLE = {{${\textrm GL}(m,2)$} acting on {$R(r,m)/R(r-1,m)$}},
	JOURNAL = {Discrete Math.},
	FJOURNAL = {Discrete Mathematics},
	VOLUME = {149},
	YEAR = {1996},
	NUMBER = {1-3},
	PAGES = {99--122},
	ISSN = {0012-365X,1872-681X},
	MRCLASS = {94B05 (20B25)},
	MRNUMBER = {1375102},
	MRREVIEWER = {Jonathan\ I.\ Hall},
	DOI = {10.1016/0012-365X(94)00342-G},
	URL = {https://doi.org/10.1016/0012-365X(94)00342-G},
}

@article {abbe2021,
    AUTHOR = {Abbe, E. and Shpilka, A. and Ye, M.},
     TITLE = {Reed-{M}uller codes: theory and algorithms},
   JOURNAL = {IEEE Trans. Inform. Theory},
  FJOURNAL = {Institute of Electrical and Electronics Engineers.
              Transactions on Information Theory},
    VOLUME = {67},
      YEAR = {2021},
    NUMBER = {6, part 1},
     PAGES = {3251--3277},
      ISSN = {0018-9448,1557-9654},
   MRCLASS = {94A29 (94A24)},
  MRNUMBER = {4289318},
       DOI = {10.1109/TIT.2020.3004749},
       URL = {https://doi.org/10.1109/TIT.2020.3004749},
}

\end{document}